\theoremstyle{plain}
\newtheorem{theorem}{Theorem}
\theoremstyle{definition}
\newtheorem{definition}[theorem]{Definition}
\newtheorem{assumption}[theorem]{Assumption}
\theoremstyle{remark}
\newtheorem{apptheorem}{Theorem}[section]
\newtheorem{appproposition}[apptheorem]{Proposition}
\newtheorem{applemma}[apptheorem]{Lemma}
\newtheorem{appcorollary}[apptheorem]{Corollary}
\theoremstyle{definition}
\newtheorem{appdefinition}[apptheorem]{Definition}
\newtheorem{appassumption}[apptheorem]{Assumption}
\theoremstyle{remark}
\newtheorem{appremark}[apptheorem]{Remark}
\newtheorem{appexample}[apptheorem]{Example}
\title{Iterative Regularization with $k$-support Norm: \\ An Important
  Complement to Sparse Recovery}
  \author{
    William de Vazelhes\textsuperscript{\rm 1},
    Bhaskar Mukhoty\textsuperscript{\rm 1},
    Xiao-Tong Yuan\textsuperscript{\rm 2},
    Bin Gu\textsuperscript{\rm 1,3}\thanks{Corresponding author}
}
\begin{document}



\usetikzlibrary{arrows,fit,positioning,shapes,snakes}

\newdimen\arrowsize
\pgfarrowsdeclare{arcsq}{arcsq}
{
 \arrowsize=0.2pt
 \advance\arrowsize by .5\pgflinewidth
 \pgfarrowsleftextend{-4\arrowsize-.5\pgflinewidth}
 \pgfarrowsrightextend{.5\pgflinewidth}
}
{
 \arrowsize=1.5pt
 \advance\arrowsize by .5\pgflinewidth
 \pgfsetdash{}{0pt} 
 \pgfsetroundjoin   
 \pgfsetroundcap    
 \pgfpathmoveto{\pgfpoint{0\arrowsize}{0\arrowsize}}
 \pgfpatharc{-90}{-140}{4\arrowsize}
 \pgfusepathqstroke
 \pgfpathmoveto{\pgfpointorigin}
 \pgfpatharc{90}{140}{4\arrowsize}
 \pgfusepathqstroke
}

\newcommand{\E}{\mathbb{E}}
\newcommand{\rs}[2]{\binom{[#2]}{#1}}  
\newcommand{\R}{\mathbb{R}}
\newcommand{\prox}{\text{prox}}
\newcommand{\sus}{s_2}  
\newcommand{\bmx}{\bm{x}}
\newcommand{\bmb}{\bm{b}}
\newcommand{\p}{\bm{p}}
\newcommand{\q}{\bm{q}}
\newcommand{\g}{\bm{g}}
\newcommand{\X}{\bm{X}}
\newcommand{\y}{\bm{y}}
\newcommand{\z}{\bm{z}}
\newcommand{\x}{\bm{x}}
\newcommand{\w}{\bm{w}}
\newcommand{\uu}{\bm{u}}
\newcommand{\vv}{\bm{v}}
\newcommand{\ddelta}{\bm{\delta}}
\newcommand{\eepsilon}{\bm{\epsilon}}

\newcommand{\XX}{\bm{X}}
\newcommand{\UU}{\bm{U}}
\newcommand{\DD}{\bm{D}}
\newcommand{\VV}{\bm{V}}

\newcommand{\e}{\bm{\varepsilon}}
\newcommand{\down}{^\downarrow}
\newcommand{\stoch}{\xi}
\newcommand{\OO}{\mathcal{O}}
\newcommand{\com}[1]{\textcolor{green}{#1}}
\newcommand{\wil}[1]{\textcolor{blue}{#1}}
\newcommand{\beg}[1]{\textcolor{black}{#1}}
\definecolor{amber}{rgb}{1.0, 0.75, 0.0}
\definecolor{ao}{rgb}{0.0, 0.5, 0.0}

\maketitle

\begin{abstract}
Sparse recovery is ubiquitous in machine learning and signal processing. Due to the NP-hard nature of sparse recovery, existing methods are known to suffer either from restrictive (or even unknown) applicability conditions, or high computational cost. Recently, iterative regularization methods have emerged as a promising fast approach because they can achieve sparse recovery in one pass through early stopping, rather than the tedious grid-search used in the traditional methods.
However, most of those iterative methods are based on the $\ell_1$ norm which requires restrictive applicability conditions and could fail in many cases. Therefore, achieving sparse recovery with iterative regularization methods under a wider range of conditions has yet to be further explored.
To address this issue, we propose a novel iterative regularization algorithm, IRKSN, based on the $k$-support norm regularizer rather than the $\ell_1$ norm. We provide conditions for sparse recovery with IRKSN, and compare them with traditional conditions for recovery with $\ell_1$ norm regularizers. Additionally, we give an early stopping bound on the model error of IRKSN with explicit constants, achieving the standard linear rate for sparse recovery. Finally, we illustrate the applicability of our algorithm on several experiments, including a support recovery experiment with a correlated design matrix.
\end{abstract}

\section*{Introduction}

\beg{Sparse recovery is ubiquitous in machine learning and signal processing, with applications ranging from single pixel camera, to MRI, or radar}\footnote{An introduction to this topic, as well as an extensive review of its applications  can be found in \cite{foucart2013} and \cite{Wright2022}.}.  In particular, with the ever-increasing amount of information, real-life datasets often contain much more features than samples: this is for instance the case in DNA microarray datasets \cite{golub1999molecular}, text data \cite{lang1995newsweeder}, or image data such as fMRI \cite{belilovsky15b}, where the number of features is generally  much larger than the number of samples. In these high-dimensional settings, finding a linear model is under-specified, and therefore, one often needs to leverage additional assumptions about the true model, such as sparsity, to recover it. Usually, the problem is formulated as follows: we seek to recover a sparse vector $\w^* \in \mathbb{R}$ from its noisy linear measurements
$$\y^{\delta} = \XX \w^{*} + \eepsilon $$
Here, $\y^{\delta}$ is a noisy measurement vector, i.e. a noisy version of the true target vector $\bm{y} =  \XX \w^{*}$, $\XX = \left[ \bmx_1, ..., \bmx_d\right] \in \R^{n \times d}$ is a measurement matrix, also called design matrix, $\eepsilon \in \R^n$ is some bounded noise ($\| \eepsilon \|_2 \leq \delta $, with $\delta \in \mathbb{R}_{+}$), and $\w^*$ is the unknown $k$-sparse vector, i.e. containing only $k$ non-zero components, that we wish to estimate with a vector $\hat{\w}$ obtained by running some sparse recovery algorithm on observations $\y^{\delta}$ and $\XX$. Unfortunately, this problem is NP-hard in general, even in the noiseless setting \cite{natarajan1995sparse}.

However, most of those iterative methods are based on the $\ell_1$ norm which requires restrictive applicability conditions and could fail in many cases. We discuss such related works in more details in the next section. Therefore, achieving sparse recovery with iterative regularization methods under a wider range of conditions has yet to be further explored.

\beg{To address this issue, we propose a novel iterative regularization algorithm, IRKSN, based on the $k$-support norm regularizer rather than the $\ell_1$ norm.} That norm was first introduced in \cite{argyriou2012sparse}, as a way to improve upon the ElasticNet for sparse prediction. More precisely, we plug the $k$-support norm regularizer, for which there exist efficient proximal computations \cite{argyriou2012sparse,mcdonald2016new}, into the primal-dual framework for iterative regularization described in \cite{matet2017}.

\beg{We then provide some conditions for sparse recovery with IRKSN, and discuss on a simple example how they compare with traditional conditions for recovery with $\ell_1$ norm regularizers.}

More precisely, we elaborate on why such specific conditions include cases that are not included in some usual sufficient conditions for recovery with traditional methods based on the $\ell_1$ norm (see Figure \ref{fig:imp}) (we describe such conditions for recovery with $\ell_1$ norm in more details in Assumption~\ref{ass:assl1}). Since those types of conditions are still slightly opaque to interpret, we do as is common in the literature (such as in \cite{zou2005regularization,jia2010}), namely, we discuss and compare those solutions with the help of an illustrative example. We also give an early stopping bound on the model error of IRKSN with explicit constants, achieving the standard linear rate for sparse recovery. 

 Finally, we illustrate the applicability of IRKSN on several experiments, including a support recovery experiment with a correlated design matrix, and show that it allows to identify the support more accurately than its competitors.

\paragraph{Contributions.} We summarize the main contributions of our paper as follows:
\begin{enumerate}
\item  We introduce a new algorithm, IRKSN, which allows recovery of the true sparse vector under conditions for which 
some sufficient conditions for recovery with $\ell_1$ norm do not hold. We discuss the difference between those conditions on a detailed example.
\item We give an early stopping bound on the model error of IRKSN with explicit constants, achieving the standard linear rate for sparse recovery.
\item We illustrate the applicability of our algorithm on several experiments, including a support recovery experiment with a correlated design matrix, and show that it allows support recovery with a higher F1 score than its competitors.
\end{enumerate}

\begin{table*}[ht]
  \begin{center}
    \begin{small}
      \begin{sc}
        \begin{tabular}{p{5.5cm}p{5.5cm}p{3cm}p{1.8cm}}
  \toprule
  Method & Condition on $\XX$& Bound on $\| \hat{\w} - \w^* \|$ &  Complexity \\
  \midrule
  IHT \cite{blumensath2009}& RIP &$O(\delta)$ & $O(T)$\\
  Lasso \cite{tibshirani1996}  &  $\max\limits_{\ell \in \bar{S}}|\langle\XX_S^{\dagger} \bmx_\ell, \operatorname{sgn}(\w^*_S)\rangle|<1 ^{(2)}  $ & $O(\delta)$  & $O(\Lambda T )$  \\
  ElasticNet \cite{zou2005regularization}  & - & - & $O(\Lambda T )$\\
  KSN pen. \cite{argyriou2012sparse}  & -  & - & $O(\Lambda T )$\\
  OMP \cite{tropp2007} & RIP & $O(\delta)$ & $O(k)$\\
  SRDI \cite{osher2016} &      $\left\{   \begin{array}{l}\exists \gamma \in (0, 1]: ~ \XX_S^{\top}\XX_{S} \geq n \gamma I_{d, d}\\
                                          \exists \eta \in (0, 1): ~\|\XX_{\bar{S}}\XX_{S}^{\dagger}\|_{\infty} \leq 1 - \eta
                                        \end{array}\right. $        &    $ O(\sigma \sqrt{\frac{k \log d}{n}}) ~^{(1)}
                                        $     &        $O(T)$      \\
                                        IROSR \cite{vaskevicius2019} & RIP & $O(\sigma \sqrt{\frac{k \log d}{n}})~^{(1)}$ &$O(T )$ \\
                                        IRCR \cite{molinari2021iterative} &  $\max\limits_{\ell \in \bar{S}}|\langle \XX_S^{\dagger} \bmx_\ell, \operatorname{sgn}(\w_S^*)\rangle |<1 ^{(2)} $ & $O(\delta)$ &$O( T )$ \\
                                        \textbf{IRKSN (ours)} & $\max\limits_{\ell \in \bar{S}} | \langle \XX_S^{\dagger} \bmx_{\ell}, \w^*_{S}\rangle |< \min\limits_{j \in S}| \langle \XX_S^{\dagger} \bmx_{j}, \w^*_{S}\rangle | $ &  $O(\delta)$ & $O( T  )$\\
                                        \bottomrule
                                      \end{tabular}
                                    \end{sc}
                                  \end{small}
                                \end{center}
                                \caption{Comparison of the existing algorithms for sparse recovery in the literature, including conditions on $\XX$ and $\w^*$ sufficient for recovery. $T$ is the number of iterations each algorithm is ran for, and $\Lambda$ is the number of values of $\lambda$ that need to be tried out (for penalized methods). \quad  $^{(1)}$ assuming $\epsilon \underset{i.i.d.}{\sim} \mathcal{N}(0, \sigma^2)$.  \quad $^{(2)}$: Additionally, $\XX_S$ should be injective.}
                                \label{tab:pen}
                              \end{table*}

\section*{Preliminaries}
\paragraph{Notations.} We first recall a few definitions and notations used in the rest of the paper. We denote all vectors and matrices variables in bold font. For $S \subseteq [d]$, $\bar{S}$ denotes $[d] \setminus S$. For any matrix $\bm{M} \in \R^{n \times d}$, $\bm{m}_i$ denotes its $i$-th column for $i \in \mathbb{N}$, $\bm{M}^{\top}$  its transpose, $\bm{M}^{\dagger}$  its Moore-Penrose pseudo-inverse \cite{golub2013matrix}, $\| \bm{M} \|$  its nuclear norm, and $\bm{M}_S$ its column-restriction to a support $S \subseteq [d]$, i.e. the $n \times |S|$ matrix composed of the $|S|$ columns of $\bm{M}$ of indices in $S$. For a vector $\w \in \R^d$, $\text{supp}(\w)$ denotes its support $\w$, that is, the coordinates of the non-zero components of $\w$, $w_i$ denotes its $i$-th component, $|w|\down_i$ denotes its $i$-th top absolute value, and $\| \w\|$ denotes its $\ell_2$ norm.

\begin{figure}[hbt!]
  \centering
\includegraphics[scale=0.4]{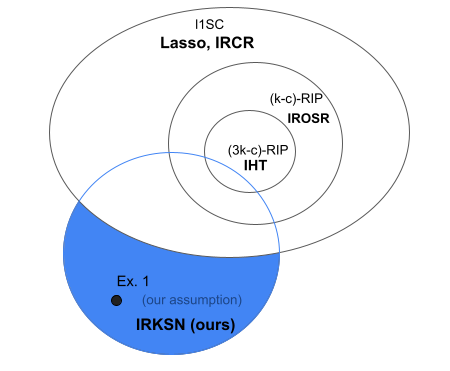}
\caption{Conditions for recovery in various settings: l1SC corresponds to the condition $\max_{\ell \in \bar{S}}|\langle\XX_S^{\dagger} \bmx_{\ell}, \operatorname{sgn}(\w^*_S)\rangle|<1$. ``ours'' denotes the condition  $\max_{i \in \bar{S}}| \langle \XX_S^{\dagger} \bmx_{i}, \w^*_{S}\rangle| < \min_{j \in S}| \langle \XX_S^{\dagger} \bmx_{j}, \w^*_{S}\rangle| $. $c$ denotes some constant in $[0, 1]$. Here $3k$-RIP is shown for indicative purposes, corresponding to the condition for IHT as described in \cite{blumensath2009}. As we can see, for some cases (in blue), only IRKSN (our algorithm) can provably ensure sparse recovery.}
   \label{fig:imp}
 \end{figure}

More generally $\| \w \|_p$ denotes its $\ell_p$ norm for $p \in [1, +\infty)$, and $\|\w\|_0$ denotes its number of non-zero components. $\w_S \in \R^k$ denotes its restriction to a support $S$ of size $k$, that is, the sub-vector of size $k$ formed by extracting only the components $w_i$ with $i \in S$.  $\text{sgn}(\w)$ denotes the vector of its signs (with the additional convention that if $\w_i = 0$, $\text{sgn}(\w)_i=0$).

\paragraph{Related works.}  

\beg{Due to the NP-hard nature of sparse recovery, existing methods are known to suffer either from restrictive (or even unknown) applicability conditions, or high computational cost.} Amongst those methods, a first group of methods can achieve an exact sparsity $k$ of the estimate $\hat{\w}$: Iterative Hard Thresholding \cite{blumensath2009} returns an estimate $\hat{\w}$ which recovers $\w^*$ up to an error $\| \hat{\w} - \w^*\| \leq O (\delta)$, if the design matrix $\XX$ satisfies some
Restricted Isometry Property (RIP) \cite{blumensath2009}.
However, as mentioned in \cite{Jain14}, this condition is very restrictive, and does not hold in  most high-dimensional problems.
Greedy methods, such as Orthogonal Matching Pursuit (OMP) \cite{tropp2007}, also can return an exactly $k$-sparse vector, and bounds on the recovery of a (generalized version of) OMP, of the type $\| \hat{\w} - \w^*\| \leq O (\delta)$, can be found for instance in \cite{wang2015recovery}, under some RIP condition.

A second set of methods for sparse recovery solve the following penalized problem:
$$ (P): \min_{\w} \| \XX \w - \y^{\delta} \|^2 + \lambda R(\w)$$
Where $R$ is a regularizer, such as the $\ell_1$ norm as is done in the Lasso method \cite{tibshirani1996}, and $\lambda$ is a penalty parameter that needs to be tuned. For a given $\lambda$,  $(P)$ is usually solved through a convex optimization algorithm, and returns a solution $\hat{\w}$ of $(P)$, as an estimate of $\w^*$. Amongst those, one of the most important algorithms for sparse recovery, the Lasso \cite{tibshirani1996}, has been proven in \cite{grasmair2011}  to give a bound $\|\hat{\w} - \w^*\| \leq O (\delta)$  under the so-called \textit{source conditions} (described in Condition 4.3 from \cite{grasmair2011}) which are implied by the following more intuitive conditions: $\XX_S$ is injective, and $\max_{\ell \in \bar{S}}|\langle\XX_S^{\dagger} \bmx_\ell, \operatorname{sgn}(\w^*_S)\rangle|<1  $ (we detail this implication in Assumption~\ref{ass:assl1}). 
Following the Lasso, the ElasticNet was later developed to solve the problem of a design matrix with possibly high correlations. However, although some conditions for \textit{statistical consistency} exist for the ElasticNet \cite{jia2010}, to the best of our knowledge, there is no model error bound (and conditions thereof) for recovery with ElasticNet. Finally, the $k$-support norm regularization has also been used successfully as a penalty \cite{argyriou2012sparse}, with even better empirical results than the ElasticNet, but no explicit error bounds on model error (and the conditions thereof) currently exists: indeed, their work was mostly focused on \textit{sparse prediction} and not \textit{sparse recovery}. Efficient solvers have later been derived for the Lasso using for instance coordinate descent and its variants \cite{fang2020,bertrand2021anderson}. However, even with efficient solvers, these penalized methods need to tune the parameter $\lambda$, which is very costly.

\beg{Recently, iterative regularization methods have emerged as a promising fast approach because they can achieve sparse recovery in one pass through early stopping, rather than the tedious grid-search used in traditional methods}. They solve the following problem
\begin{align*}
 (I): \quad \quad &\min_{\w} R(\w)\\
\text{s.t.} \quad & \XX \w = \y^{\delta}  
\end{align*}
An iterative algorithm is used to solve it, and returns some $\hat{\w}$ to estimate $\w^*$. Importantly, $\hat{\w}$ is obtained by stopping the algorithm before convergence, also called \textit{early stopping}. One of the first amongst these methods, SRDI \cite{osher2016}, achieves a rate of $\| \hat{\w} - \w\| \leq O(\sigma \sqrt{\frac{k \log d}{n}})$ with high probability, assuming $\epsilon \underset{i.i.d.}{\sim} \mathcal{N}(0, \sigma)$, and two conditions: (1) $\exists \gamma \in (0, 1]: ~ \XX_S^{\top}\XX_{S} \geq n \gamma I_{d, d}$ (Restricted Strong Convexity) and (2) $\exists \eta \in (0, 1): ~\|\XX_{\bar{S}}\XX_{S}^{\dagger}\|_{\infty} \leq 1 - \eta$.
IROSR \cite{vaskevicius2019} uses an iterative regularization scheme that is based on a reparameterization of the problem (I). They prove a high probability model consistency bound of $\|\hat{\w} - \w^*\| \leq O (\sigma\sqrt{\frac{k \log d}{n}})$, assuming  the ($(k+1, c)$-RIP for some constant $c(k, \w^*, \XX, \epsilon)$.  Similar to their work is \cite{zhao2022}: under similar conditions, they also obtain a similar rate. Finally, \cite{molinari2021iterative} provide bounds of the form $\| \hat{\w} - \w\| \leq O(\delta)$, under the same \textit{source conditions} as in \cite{grasmair2011}.

\beg{However, most of those iterative methods are based on the $\ell_1$ norm which requires restrictive applicability conditions and could fail in many cases. } Indeed, in those cases, the conditions for recovery with the methods described above 
(e.g. RIP, or the sufficient conditions for recovery with Lasso that we discussed above)
do not hold anymore. For instance, in gene array data \cite{zou2005regularization}, it is known that many columns of the design matrix are correlated, and that RIP does not hold. It is therefore crucial to come up with algorithms for which recovery is provably possible under different conditions, which we tackle in this paper.

\paragraph{$k$-support Norm Regularization.} We now introduce the $k$-support norm, which is the main component of our algorithm, as well as its proximal operator. The $k$-support norm was first introduced in \cite{argyriou2012sparse}, as the tightest convex relaxation of the intersection of the $\ell_2$ ball and the $\ell_0$ ball.
It was later generalized to the matrix case \cite{mcdonald2016fitting,mcdonald2016new}, as well as successfully applied to several problems, including for instance fMRI \cite{gkirtzou2013fmri,belilovsky15b}. We give below its formal definition, with the following variational formula from \cite{argyriou2012sparse}:

\begin{definition}[\cite{argyriou2012sparse,mcdonald2014spectral}]
  Let $k \in \{1, ..., d\}$. The $k$-support norm $ \| \cdot \|_k^{sp}$ is defined, for every $\w \in \R^d$, as:
  \begin{align*}
    \| \w \|_k^{sp} = \min & \left\{\sum_{I \in \mathcal{G}_k} \left\|\bm{v}_I\right\|_2 : \bm{v}_I \in \mathbb{R}^d, \operatorname{supp}\left(\bm{v}_I\right) \subseteq I, \right. \\
    &\left. ~~\sum_{I \in \mathcal{G}_k} \bm{v}_I = \bm{w} \right\}
  \end{align*}
            where $\mathcal{G}_k$ denotes the set of all subsets of $\{1, . . . , d\}$ of cardinality at most $k$.
  \end{definition}
In other words, the $k$-support norm is equal to the smallest sum of the norms of some $k$-sparse \textit{atoms} (the $\y_I$ above) that constitute $\w$: as studied in \cite{chatterjee2014}, the $k$-support norm is indeed a so-called \textit{atomic norm}. One can also see from this definition that the $k$-support norm interpolates between the $\ell_1$ norm (which it is equal to if $k=1$) and the $\ell_2$ norm (which it is equal to if $k=d$). As discussed in \cite{argyriou2012sparse}, another interpretation of the $k$-support norm is that it is equivalent to the Group-Lasso penalty with overlaps \cite{jacob2009group}, when the set of overlapping groups is all possible subsets of $\{1, . . . , d\}$ of cardinality at most $k$.
Finally, we introduce the proximal operator  \cite{parikh2014proximal} below, that will be used in our algorithm:
\begin{definition}[Proximal operator, \cite{parikh2014proximal}]
  The proximal operator for a function $h: \R^d \rightarrow \R$ is defined as:
  $$ \prox_{h}(\z) = \arg\min_{\w} h(\w) + \frac{1}{2}\|\w - \z\|_2^2$$
\end{definition}
A closed form for the proximal operator of the squared $k$-support norm was first given in \cite{argyriou2012sparse}, and more efficient computations have been found e.g. in \cite{mcdonald2016new}, which we will use in IRKSN, as described in Appendix~\ref{sec:proxksn}. 
\section*{The Algorithm}
In this section, we describe the IRKSN (Iterative Regularization with $k$-Support Norm) algorithm. It is based on the general accelerated algorithm from \cite{matet2017}, in which we plug a regularization function based on the $k$-support norm. More precisely, \cite{matet2017} describe a general regularization algorithm for model recovery based on a primal-dual method, and an early stopping rule. As they do, we will solve the following problem approximately (i.e. with early stopping):
\begin{align*}
 (I_{ks}): \quad \quad &\min_{\w} R(\w) \nonumber\\
\text{s.t.} \quad & \XX \w = \y^{\delta} 
\end{align*}
with a specific regularizer that we introduce: $R(\w) =  F(\w)  + \frac{\alpha}{2} {\|\w \|_2}^2$ with $ F(\w) =  \frac{1 - \alpha}{2}(\|\w \|^{sp}_{k})^2 $, for some constant $1 > \alpha > 0 $ which will be described later.
The algorithm that we will use to solve approximately $(I_{ks})$ is the Accelerated Dual Gradient Descent (ADGD) described in \cite{matet2017}, which is an accelerated version of a primal-dual method that is known in the literature under many names, and that comprises the following steps, with $\gamma$ being some learning rate, and $\hat{\bm{v}}_t$ being a dual variable:\\
$\texttt{\#  primal projection step}$ \\
$\hat{\w}_t \leftarrow \text{prox}_{\alpha^{-1} F}(-\alpha^{-1}\XX^{\top} \hat{\bm{v}}_{t})$\\
$\texttt{\# dual update step}$\\
$ \hat{\bm{v}}_{t+1} \leftarrow \hat{\bm{v}}_{t} + \gamma (\XX \hat{\w}_t - \bm{y}^{\delta})$\\
The method above is most commonly known in the signal processing and image denoising literature as Linearized Bregman Iterations, or Inverse Scale Space Methods \cite{cai2009linearized,osher2016}. In the optimization literature, it is mostly known as (Lazy) Mirror Descent \cite{bubeck2015convex}, also called Dual Averaging \cite{nesterov2009,Xiao09}. The main idea in \cite{matet2017} is to early stop the algorithm at some iteration $T$, before convergence. We present the full accelerated version, IRKSN, in Algorithm \ref{alg:irksn}. 
\begin{algorithm}[!hb]
  \caption{IRKSN}
  \label{alg:irksn}
  \begin{algorithmic}
    \STATE {\bfseries Input:} $\bm{\hat{v}}_0=\bm{\hat{z}}_{-1}=\bm{\hat{z}}_0 \in \mathbb{R}^d, \gamma=\alpha\|\XX\|^{-2},
    \theta_0=1$ 
    \FOR {$t=0$ {\bfseries to} $T$}
    \STATE $\hat{\w}_t\leftarrow \operatorname{prox}_{\alpha^{-1} F}\left(-\alpha^{-1} \XX^T \bm{\hat{z_t}}\right)$ \\
    \STATE $\bm{\hat{r}}_t\leftarrow \operatorname{prox}_{\alpha^{-1} F}\left(-\alpha^{-1} \XX^T \bm{\hat{v}}_t\right)$ \\
    \STATE     $ \bm{\hat{z}}_t\leftarrow \bm{\hat{v}}_t+\gamma\left(\XX \bm{\hat{r_t}}-\bm{y}^{\delta}\right)$\\
    \STATE $\theta_{t+1}\leftarrow \left(1+\sqrt{1+4 \theta_t^2}\right) / 2$\\
    \STATE $\bm{\hat{v}}_{t+1}=\bm{\hat{z}}_t+\frac{\theta_t-1}{\theta_{t+1}}\left(\bm{\hat{z}}_t-\bm{\hat{z}}_{t-1}\right)$\\
    \ENDFOR 
  \end{algorithmic}
\end{algorithm}
\section*{Main Results}
In this section, we introduce the main result of our paper, which gives specific conditions for robust recovery of $\w^*$, and early stopping bounds on $\|\hat{\w}_t - \w^*\|$ for IRKSN.

\subsection*{Assumptions}

We will present several sufficient conditions for recovery with the $k$-support norm, which are similar to the sufficient conditions needed for $\ell_1$-based recovery that we describe in Assumption~\ref{ass:assl1} (we will then elaborate on the differences between such conditions).
The first assumption below is a variant of the usual feasibility assumption of the noiseless problem \cite{foucart2013}: it simply states that $\w^*$, the true model that we wish to recover, is a feasible solution of the noiseless problem,  and that it is $k$-sparse. Additionally, if several feasible solutions of same support than $\w^*$ exist, $\w^*$ should be the smallest norm one (we will elaborate on such condition in this section). Recall from the Introduction that $\bm{y}$ is the true target vector, i.e. uncorrupted by noise.
\begin{assumption}\label{ass:sol}
  $\w^*$ is $k$-sparse of support $S \subset [d]$, and is a solution of the system $(L): \XX \w = \y$. In addition, $\w^*$  is the smallest $\ell_2$ norm solution of $(L)$ on its support, that is,  $\w^*$ is such that:  
  $$\w_S^* = \arg \min_{\z \in \R^{k}: \XX_S \z = \y} \| \z \|_2$$
\end{assumption}
We now provide our main assumption, which is intrinsically linked to the structure of the $k$-support norm, and which is, up to our knowledge, the first condition of such kind in the sparse recovery literature.
\begin{assumption}\label{ass:ass}  $\w^*$ verifies:
  $$\max_{\ell \in \bar{S}}| \langle \XX_S^{\dagger} \bmx_{\ell}, \w^*_{S}\rangle| < \min_{j \in S}| \langle \XX_S^{\dagger} \bmx_{j}, \w^*_{S}\rangle| $$
\end{assumption}
Up to our knowledge, we are the first to provide such assumptions for recovery with a $k$-support norm based algorithm: although \cite{chatterjee2014} proposed a $k$-support norm based algorithm and corresponding conditions for recovery, those conditions only apply in the case of a design matrix $\bm{X}$ with values which are i.i.d. samples from a Gaussian distribution. 

\subsection*{Discussion on the Assumptions}

In this section, we attempt to interpret the assumptions above in simple terms, and to compare them to some similar sufficient conditions for recovery with $\ell_1$ norm. More precisely, the condition below implies Condition 4.3 from \cite{grasmair2011}, which latter is shown in \cite{grasmair2011} to be a necessary and sufficient condition for achieving a linear rate of recovery with $\ell_1$ norm Tikhonov regularization. We prove such  implication in Appendix \ref{app:recall}.

\begin{assumption}[Recovery with $\ell_1$ norm.]\label{ass:assl1}
   Let $\w^*$ be supported on a support $S \subset [d]$. $\w^*$ is such that:
   \begin{enumerate}[label={(\roman*)},align=left]
  \item $\XX\w^* = \bm{y}$
  \item $\XX_{S}$ is injective
  \item $\max_{\ell \in \bar{S}}|\langle\XX_S^{\dagger} \bmx_{\ell}, \operatorname{sgn}(\w_S^*)\rangle|<1 $
  \end{enumerate}  
\end{assumption}

Below, we now compare this assumption to ours.

\textbf{The min $\ell_2$ norm solution.}
  In our Assumption~\ref{ass:sol}, the minimum $\ell_2$ norm condition is actually not restrictive, compared to Assumption~\ref{ass:assl1}: indeed, in Assumption~\ref{ass:assl1}  $\XX_S$ needs to be injective, which implies that there needs to be  only one solution $\w^*_S$ on $S$ such that $\XX_S\w_s^* = \bm{y}$: we can also work in such situations, but we also include  the additional cases where there are several solutions on $S$ (we just require that $\w^*$ is the minimum norm one) : $\XX_S$ does not need to be injective in our case. Importantly we can deal with cases with $n < k$, when Lasso (and $\ell_1$ iterative regularization methods) cannot (that is, we can obtain recovery in a regime where the number of samples $n$ is even lower than the sparsity of the signal $k$).  Note that for the Lasso, the condition $n \geq k$ is even \textit{necessary}: indeed, when $n < k$, the Lasso is known to saturate \cite{zou2005regularization} and recovery is impossible: interestingly, there is no such constraint when using a $k$-support norm regularizer (similarly to recovery with ElasticNet).
  
\textbf{Dependence on the sign.}
  As we can observe, Assumption~\ref{ass:assl1} is verified or not based on $\operatorname{sgn}(\w_S^*)$. This implies that irrespective of the actual values of $\w^*$, recovery will be possible or not only based on $\operatorname{sgn}(\w_S^*)$. On the contrary, our Assumption \ref{ass:ass} depends on $\w^*$ itself.

\textbf{Case where $\XX_S$ is injective.} In the case where $\XX_S$ is injective (as will happen in most cases in practice when $n > k$, i.e. unless there is some spurious exact linear dependence between columns), it is even easier to compare Assumptions \ref{ass:ass} and \ref{ass:assl1}. Indeed, since in that case we have that $\XX_S$ is full column rank,  we then have : $\XX_S^{\dagger} \XX_S = \bm{I}_{k \times k}$. 
Therefore, Assumption \ref{ass:ass} can be rewritten into: $\max_{\ell \in \bar{S}}| \langle \XX_S^{\dagger} \bmx_{\ell}, \w^*_{S}\rangle| < \min_{j \in S}| w^*_i|$, which is equivalent to:
$$\max_{\ell \in \bar{S}}| \langle \XX_S^{\dagger} \bmx_{\ell}, \frac{\w^*_{S}}{\min_{j \in S}| w^*_i|}\rangle| < 1 $$
Therefore, we can notice that if $\w^*_S = \gamma \operatorname{sgn}(\w_S^*)$ for some $\gamma >0$  (that is, each component of $\w^*_S$ have the same absolute value), both Assumptions \ref{ass:ass} and \ref{ass:assl1} become equivalent (because then: $\frac{\w^*_{S}}{\min_{j \in S}| w^*_i|} = \operatorname{sgn}(\w_S^*)$). However, the two conditions \ref{ass:ass} and \ref{ass:assl1} may differ depending on the \textit{relative magnitudes} of the entries in $\w^*_S$. In particular, it may happen that our Assumption~\ref{ass:ass} is verified even if the Assumption~\ref{ass:assl1} is not verified. We analyze such an example in Example 1.
\subsection*{Early Stopping Bound}
We are now ready to state our main result: 
\begin{theorem}[Early Stopping Bound]\label{thm:thm} Let $\delta\in\left]0,1\right]$ and let  $(\hat{\w}_t)_{t\in\mathbb{N}}$  be  the sequence generated by IRKSN.
  Assuming the design matrix $\XX$ and the true sparse vector $\w^*$ satisfy Assumptions~\ref{ass:sol}  and \ref{ass:ass}, and with $\alpha <  \frac{\eta }{\|\w\|_{\infty}}$ with $\eta := \min_{j \in S}| \langle (\XX_{S} \XX_{S}^{\top})^{\dagger} \bm{y}, \bmx_{j} \rangle|- \max_{\ell \in \bar{S}}| \langle  (\XX_{S} \XX_{S}^{\top})^{\dagger} \bm{y}, \bmx_{\ell} \rangle|$, we have for $t \geq 2$: 
  $$ \| \hat{\w}_t - \w^*\|_2 \leq a t \delta + b t^{-1}$$ $$\text{with} \quad a = 4 \|\XX\|^{-1} \quad \text{and}\quad b = \frac{2\| \XX\| \| (\XX_S^{\top})^{\dagger} \w_S^*\|}{\alpha}$$
  In particular (if $\delta > 0$), with $t_{\delta} = \lceil c \delta^{-1/2} \rceil$, for some $c > 0$:
    $$ \| \hat{\w}_t - \w^*\|_2 \leq (a(c+1) + b c^{-1}) \delta^{1/2}$$
  \end{theorem}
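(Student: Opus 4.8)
The plan is to recognize IRKSN as an instance of the accelerated primal--dual scheme of \cite{matet2017} applied to the $\alpha$-strongly convex regularizer $R(\w)=\frac{1-\alpha}{2}(\|\w\|_k^{sp})^2+\frac{\alpha}{2}\|\w\|_2^2$, and to invoke their general early stopping estimate. That estimate applies once two facts are in place: (i) $\w^*$ is the unique minimizer of $R$ subject to the \emph{noiseless} constraint $\XX\w=\y$, and (ii) a source/range condition holds, i.e. there is a dual certificate $\bar{\vv}$ with $\XX^\top\bar{\vv}\in\partial R(\w^*)$, whose norm is exactly what will enter the constant $b$. Since $R$ is strongly convex, (ii) together with feasibility already gives (i) by convex duality (for any feasible $\w$, $R(\w)\ge R(\w^*)+\langle\bar{\vv},\XX(\w-\w^*)\rangle=R(\w^*)$, with uniqueness from strong convexity), so the heart of the argument is constructing and validating $\bar{\vv}$. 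I would take $\bar{\vv}=(\XX_S^\top)^\dagger\w_S^*$, and first record the identity $\langle\XX_S^\dagger\bmx_i,\w_S^*\rangle=\langle\bar{\vv},\bmx_i\rangle$ for every $i$, which shows that the quantities in Assumption~\ref{ass:ass} and in the definition of $\eta$ coincide, and that $\bar{\vv}=(\XX_S\XX_S^\top)^\dagger\y$.

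The key step is verifying $\XX^\top\bar{\vv}\in\partial R(\w^*)$. Because $\w^*$ is $k$-sparse, $\|\w^*\|_k^{sp}=\|\w^*\|_2$, so $\partial R(\w^*)=(1-\alpha)\|\w^*\|_2\,\partial\|\w^*\|_k^{sp}+\alpha\w^*$, where $\partial\|\w^*\|_k^{sp}$ is the set of $\uu$ with $\langle\uu,\w^*\rangle=\|\w^*\|_2$ and dual norm at most $1$ (the dual of $\|\cdot\|_k^{sp}$ being the $\ell_2$ norm of the $k$ largest-in-absolute-value entries). Writing $\g=\XX^\top\bar{\vv}$, the candidate subgradient is $\uu=(\g-\alpha\w^*)/((1-\alpha)\|\w^*\|_2)$, and I must check it is admissible. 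On the support, $\XX_S^\top\bar{\vv}=\XX_S^\top(\XX_S^\top)^\dagger\w_S^*$ is the projection of $\w_S^*$ onto the row space of $\XX_S$; Assumption~\ref{ass:sol} (minimum-norm solution) forces $\w_S^*$ into that row space, so $\g_S=\w_S^*$ and hence $\uu_S=\w_S^*/\|\w^*\|_2$, giving $\langle\uu,\w^*\rangle=\|\w^*\|_2$ and $\|\uu_S\|_2=1$. Off the support, $u_\ell=g_\ell/((1-\alpha)\|\w^*\|_2)$, and the dual-norm constraint reduces to $|u_\ell|\le\min_{j\in S}|u_j|$, i.e. $\max_{\ell\in\bar S}|g_\ell|\le(1-\alpha)\min_{j\in S}|w_j^*|$. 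Since $g_j=w_j^*$ on $S$, Assumption~\ref{ass:ass} reads $\eta=\min_{j\in S}|w_j^*|-\max_{\ell\in\bar S}|g_\ell|>0$, and the hypothesis $\alpha<\eta/\|\w^*\|_\infty$ is precisely what yields $\max_{\ell}|g_\ell|<(1-\alpha)\min_{j\in S}|w_j^*|$; hence the top-$k$ absolute entries of $\uu$ are those indexed by $S$ and $\|\uu\|_k^{sp,*}=\|\uu_S\|_2=1$. This certifies $\bar{\vv}$.

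With the certificate in hand, I would substitute $\|\bar{\vv}\|=\|(\XX_S^\top)^\dagger\w_S^*\|$, the strong-convexity modulus $\alpha$, the step size $\gamma=\alpha\|\XX\|^{-2}$, and the quantity $\|\XX\|$ into the accelerated early stopping bound of \cite{matet2017}. Their analysis splits $\|\hat{\w}_t-\w^*\|$ into an approximation term decaying like $\|\bar{\vv}\|/(\alpha t)$ and a noise-amplification term growing like $t\delta/\|\XX\|$; tracking the explicit constants gives $\|\hat{\w}_t-\w^*\|_2\le at\delta+bt^{-1}$ with $a=4\|\XX\|^{-1}$ and $b=2\|\XX\|\,\|(\XX_S^\top)^\dagger\w_S^*\|/\alpha$.

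Finally, the a priori choice follows from balancing the two terms: with $t_\delta=\lceil c\delta^{-1/2}\rceil$ one has $a t_\delta\delta\le a(c\delta^{1/2}+\delta)\le a(c+1)\delta^{1/2}$ using $\delta\le1$, and $b t_\delta^{-1}\le bc^{-1}\delta^{1/2}$, which sum to the claimed $O(\delta^{1/2})$ bound. I expect the main obstacle to be the certificate verification: handling the subdifferential of the \emph{squared} $k$-support norm at a $k$-sparse point (where $\|\cdot\|_k^{sp}$ degenerates to $\|\cdot\|_2$ yet its subgradients still carry the top-$k$ dual-norm constraint), and showing that the minimum-norm assumption together with the sharp threshold $\alpha<\eta/\|\w^*\|_\infty$ keeps $S$ as the set of the $k$ largest dual coordinates. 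Re-deriving the explicit constants $a,b$ from the generic estimate of \cite{matet2017} will also require care but is essentially mechanical.
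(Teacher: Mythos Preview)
Your proposal is correct and follows the same high-level strategy as the paper: construct the certificate $\bar{\vv}=(\XX_S^\top)^\dagger\w_S^*$, verify that $\XX^\top\bar{\vv}\in\partial R(\w^*)$, and then invoke the generic early stopping bound of \cite{matet2017}. The difference lies in how the subgradient membership is checked. The paper works on the \emph{conjugate} side: it rewrites $-\XX^\top\bm{\lambda}\in\partial R(\w^*)$ as $(1-\alpha)\w^*\in\partial\bigl(\tfrac12\|\cdot\|_{(k)}^2\bigr)(-\XX^\top\bm{\lambda}-\alpha\w^*)$ via Fenchel duality, identifies the latter subdifferential with $\mathrm{conv}(\pi_{HT}(\cdot))$, and then argues that the hard-thresholding selects exactly $S$ under the $\alpha$ condition. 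You instead stay on the primal side, using $\partial(\tfrac12\|\cdot\|_k^{sp})^2(\w^*)=\|\w^*\|_2\,\partial\|\cdot\|_k^{sp}(\w^*)$ at the $k$-sparse point and the dual-norm characterization of norm subgradients, reducing everything to the single inequality $\max_{\ell\in\bar S}|g_\ell|<(1-\alpha)\min_{j\in S}|w_j^*|$. Your route is a bit more direct and avoids the auxiliary lemma on the subdifferential of the squared top-$k$ norm; the paper's route makes the link with hard thresholding explicit, which ties in nicely with the ``exchange of variable'' phenomenon discussed in the Illustrating Example. One small point: the quoted theorem from \cite{matet2017} uses $\|\bm{v}^*\|$ for an actual dual solution, not an arbitrary certificate, so you should state explicitly (it follows from KKT for the strongly convex equality-constrained problem) that any $\bar{\vv}$ with $-\XX^\top(-\bar{\vv})\in\partial R(\w^*)$ and $\XX\w^*=\y$ is automatically dual optimal; the paper proves this separately as part (2) of its key lemma, but your convex-duality remark already contains the essence of it.
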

  \begin{proof}
Proof in Appendix \ref{proof:thm}.    
  \end{proof}
\paragraph{Discussion.}
We can notice in Theorem \ref{thm:thm} above that  $b$ is large when $\alpha$ is small: therefore, if the inequality in \ref{ass:ass} is very tight, as a consequence, $\alpha$ will need to be taken small, and $b$ will become large. Therefore, we can say that the larger the margin by which Assumption \ref{ass:ass} is fulfilled is, the better the retrieval of the true vector $\w^*$ is (because the larger we can choose $\alpha$).
\section*{Illustrating Example} 
In this section, we describe a simple example that illustrates the cases where $\ell_1$ norm-based regularization fails, and where IRKSN will successfully recover the true vector. 

\textbf{Example 1. } We consider a model that consists of three ``generating'' variables $X^{(0)}, X^{(1)}$ and $X^{(2)}$, that are random i.i.d. variables from standard Gaussian  (we denote $X^{(0)} \sim \mathcal{N}(0, 1)$ and $X^{(1)} \sim \mathcal{N}(0, 1)$ and $X^{(2)} \sim \mathcal{N}(0, 1)$). 
Two other variables $X^{(3)}$ and $X^{(4)}$, are actually correlated with the previous random variables: they are obtained noiselessly, and linearly from those, with some vectors $ \bm{w}^{(3)}$ and $ \bm{w}^{(4)}$ that will be defined below:
$$X^{(3)} =  w^{(3)}_0 X^{(0)} + w^{(3)}_1 X^{(1)} + w^{(3)}_2 X^{(2)}$$ and $$X^{(4)} =  w^{(4)}_0 X^{(0)} + w^{(4)}_1 X^{(1)} + w^{(4)}_2 X^{(2)}$$
In addition, similarly, the actual observations $Y$ are formed noiselessly and linearly from $( X^{(0)},  X^{(1)},  X^{(2)})$, for some vector $\bm{w}^{(y)}$:
$$Y =  w^{(y)}_0 X^{(0)} + w^{(y)}_1 X^{(1)} + w^{(y)}_2 X^{(2)}$$
A graphical visualization of this construction can be seen on Figure \ref{fig:toyxpgraph}. More precisely, we define the vectors $\bm{w}^{(3)}, \bm{w}^{(4)}$ and $\bm{w}^{(y)}$ are defined as follows:
\begin{equation*}
      \bm{w}^{(3)} = \begin{bmatrix}
               9/11\\
                6/11\\
                2/11\\
                0\\
                0\\   
 \end{bmatrix},
   \bm{w}^{(4)} = \begin{bmatrix}
               1/3\\
                14/15\\
                2/15\\
                0\\
                0\\   
                                      \end{bmatrix},
        \bm{w}^{(y)} = \begin{bmatrix}
               1\\
                1\\
                -4\\
                0\\
                0\\   
	\end{bmatrix}.
	\end{equation*}
    We will generate such a dataset with $n=4$: so the dataset will be composed of 4 samples of $X^{(0)}, X^{(1)}, X^{(2)}, X^{(3)}, X^{(4)}$, which form the matrix $\XX \in \R^{4, 5}$, with $\XX = \left[\bmx_0, \bmx_1, \bmx_2, \bmx_3 , \bmx_4 \right]$  and 4 samples of $Y$, which form the vector $\bm{y} \in \R^4$. In our case, we have $S= \text{supp}(\w^{(y)}) = \{0, 1, 2 \}$, and therefore we just ensure that $\XX_S = \left[\bmx_0, \bmx_1, \bmx_2 \right]$ is full column rank  (which should be the case with overwhelming probability since those three first vectors are sampled from a Gaussian, and since we have $n=4 > k=3$). Our goal is to reconstruct the true linear model of $Y$, which is $\bm{w}^{(y)}$ from the observation of $\XX$ and $\y$. 
\begin{figure}[htp]
  \begin{center}
  \begin{tikzpicture}[-, scale=.6, line width=0.5pt, inner sep=0.2mm, shorten >=.1pt, shorten <=.1pt]
\draw (0, 0) node(1) [circle, draw] {{\footnotesize\,$X^{(0)}$\,}};
  \draw (1, -1.5) node(2) [circle, draw] {{\footnotesize\,$X^{(1)}$\,}};
\draw (3, -2) node(3) [circle, draw] {{\footnotesize\,$X^{(2)}$\,}};
\draw (5, -1.5) node(4) [circle, draw] {{\footnotesize\,$X^{(3)}$\,}};
\draw (6, 0) node(5) [circle, draw] {{\footnotesize\,$X^{(4)}$\,}};
\draw (3, 2) node(6) [circle, draw] {{\footnotesize\,$~Y^{~}$\,}};
  \draw[-arcsq, thick] (1) --(6); 
    \draw[-arcsq, thick] (2) --(6); 
    \draw[-arcsq, thick] (3) edge (6); 
      \draw[-arcsq, thick] (1) --(4); 
    \draw[-arcsq, thick] (2) --(4); 
    \draw[-arcsq, thick] (3) edge (4); 
          \draw[-arcsq, thick] (1) --(5); 
    \draw[-arcsq, thick] (2) --(5); 
  \draw[-arcsq, thick] (3) edge (5); 
        \end{tikzpicture}
\caption{$X^{(3)}$, $X^{(4)}$ are correlated with $X^{(0)}, X^{(1)}$, $X^{(2)}$}
\label{fig:toyxpgraph}
  \end{center}
\end{figure}
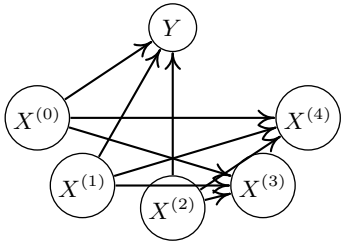
    We can easily check mathematically (using the closed form from the first column of Table \ref{tab:pen}), that this example only verifies our condition (Assumption \ref{ass:ass}), but that it does not verify Assumption \ref{ass:assl1} (i.e. it is in the blue area from Figure \ref{fig:imp}). Indeed, in that case, $\XX_S$ is full column rank, which implies $(\XX_S)^{\dagger}\bmx_3 = \bm{w}^{(3)}$ and $(\XX_S)^{\dagger}\bmx_4 =\bm{w}^{(4)}$ \cite{golub2013matrix}.
    We then have:
    \begin{align*}
      |\langle\XX_S^{\dagger} \bmx_3, \operatorname{sgn}(\w^{(y)})\rangle| &= | \langle \bm{w}^{(3)}, \operatorname{sgn}(\w^{(y)})\rangle | = 13/11>1      
    \end{align*}
    \begin{align*}
     |\langle\XX_S^{\dagger} \bmx_4, \operatorname{sgn}(\w^{(y)})\rangle| &= | \langle \bm{w}^{(4)}, \operatorname{sgn}(\w^{(y)})\rangle |= 17/15>1      
    \end{align*}
    Therefore: $\max_{\ell \in \bar{S}}|\langle\XX_S^{\dagger} \bmx_\ell, \operatorname{sgn}(\w_S^*)\rangle| = \frac{13}{11} >1$
    Which means that Assumption \ref{ass:assl1} is not verified.    However, on the other hand, we have:
    \begin{align*}
      | \langle\XX_S^{\dagger} \bmx_3, \frac{\bm{w}^{(y)}}{\min_{j \in S}| w^{(y)}_i|}\rangle| &= | \langle \bm{w}^{(3)},\frac{\bm{w}^{(y)}}{\min_{j \in S}| w^{(y)}_i|} \rangle | = \frac{7}{11}    
    \end{align*}
    \begin{align*}
      |\langle\XX_S^{\dagger} \bmx_4, \frac{\bm{w}^{(y)}}{\min_{j \in S} | w^{(y)}_i|}\rangle | &= | \langle \bm{w}^{(4)}, \frac{\bm{w}^{(y)}}{\min_{j \in S}| w^{(y)}_i|}\rangle | = \frac{11}{15}     
    \end{align*}
    Therefore: $    \max_{\ell \in \bar{S}}|\langle\XX_S^{\dagger} \bmx_\ell, \frac{\bm{w}^{(y)}}{\min_{j \in S}| w^{(y)}_i|} \rangle | = \frac{11}{15} <1$.
    Therefore, from the Section \textit{Discussion on the Assumptions}, paragraph \textit{Case where $\XX_S$ is injective}, we see that our Assumption \ref{ass:ass} is verified here.
      \paragraph{Comparison of the IRKSN path with Lasso.} 
  In Figure \ref{fig:path} below, we compare the Lasso path (that is, the solutions found by Lasso for all values of the penalization $\lambda$), with the IRKSN path (that is, the solutions found by IRKSN at every timestep).  For indicative purposes, we also provide the path of the ElasticNet on the same problem in Appendix \ref{app:path}.
      \begin{figure}[!bht]
  \centering
  \subfigure[Lasso path]{\includegraphics[scale=0.23]{./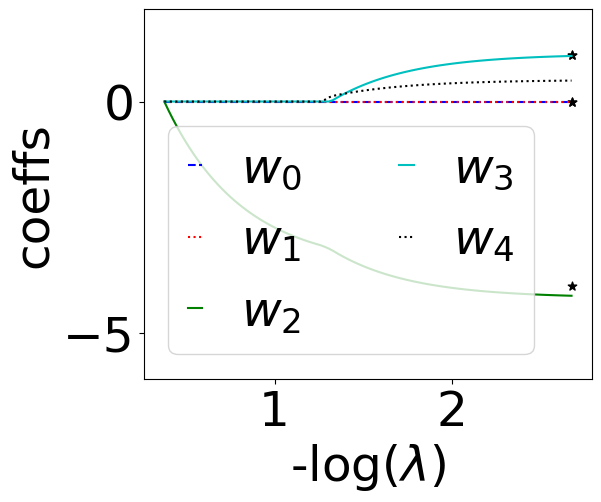}\label{fig:lassopath}} 
  \subfigure[IRKSN path]{\includegraphics[scale=0.23]{./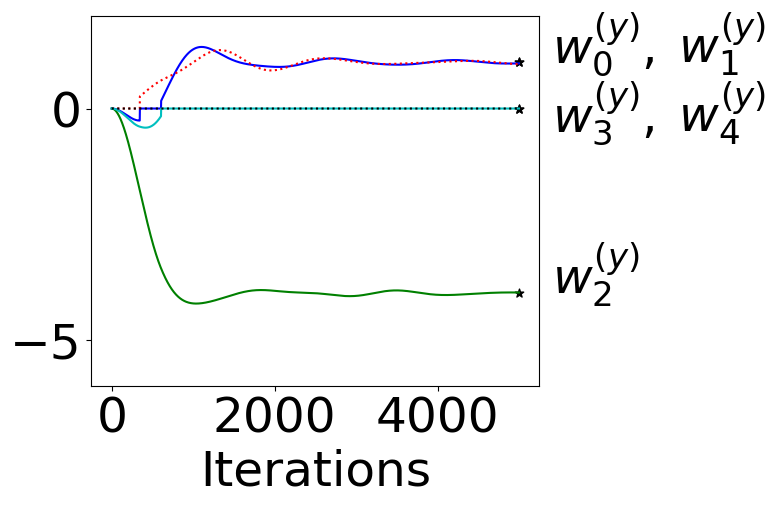}\label{fig:irksnpath}}
   \caption{Comparison of the path of IRKSN with Lasso. $w^{(y)}_i$
   is the $i$-th component of $\bm{w}^{(y)}$, and $\lambda$ is the penalty of the Lasso. We recall $w^{(y)}_0=w^{(y)}_1=1, w^{(y)}_2=-4, w^{(y)}_3=w^{(y)}_4=0$: only IRKSN recovers the true $\bm{w}^{(y)}$.}
   \label{fig:path}
 \end{figure}

 \begin{figure}[!bht]
  \centering
  \subfigure[Model error $ \|\hat{\bm{w}} - \bm{w}^{(y)} \|$]{\includegraphics[scale=0.25]{./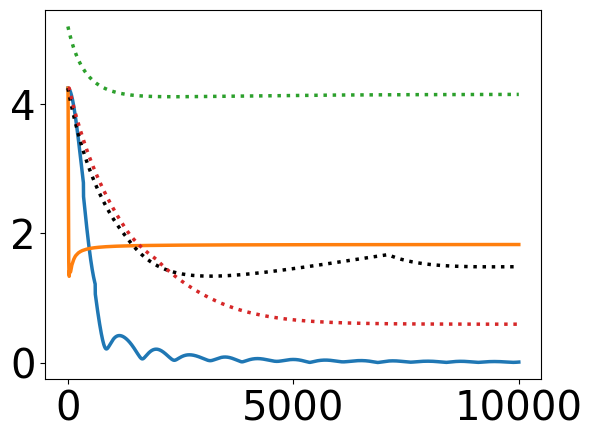}\label{fig:toycp}}\quad 
    \subfigure[Model sparsity $\|\hat{\bm{w}}\|_0$]{\includegraphics[scale=0.25]{./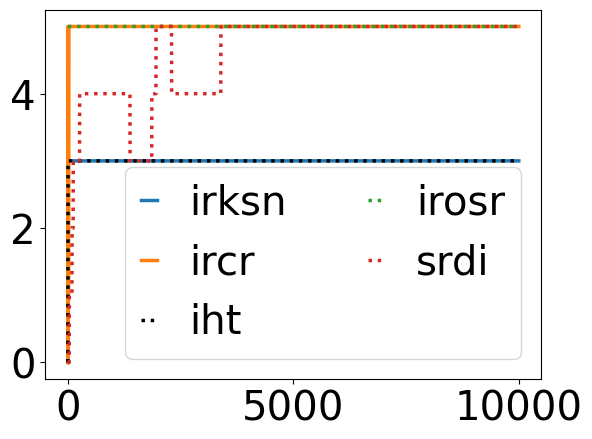}\label{fig:spa}}
   \caption{Error and sparsity vs. number of iterations. Only IRKSN can recover the true $\bm{w}^{(y)}$ in this example.}
   \label{fig:bothfig}
 \end{figure}

 As we can see, the Lasso is unable to retrieve the true sparse vector, for any $\lambda$. However IRKSN can successfully retrieve it, which confirms the theory above.

 In addition, this path from Figure \ref{fig:path} above illustrates well the optimization dynamics of IRKSN: first, the true support of $\w^{(y)}$ is not identified in the first iterations. But after a few iterations, we observe what we could call a phenomenon of \textit{exchange of variable}:  $w^{(y)}_0$ is exchanged with $w^{(y)}_1$, and later, $w^{(y)}_3$ is exchanged with $w^{(y)}_0$ (by \textit{exchange}, we mean that at a timestep $t$, $w^{(y)}_0(t) \neq 0$ but $w^{(y)}_1(t)=0$, but at timestep $t + 1$: $w^{(y)}_0(t+1) = 0$ and $w^{(y)}_1(t+1)\approx w^{(y)}_0(t)$). This can be explained by the fact that when $\alpha$ is small, the proximal operator of the $k$-support norm approaches the hard-thresholding operator from \cite{blumensath2009}: hence at a particular timestep the ordering (in absolute magnitude) of the components of $\XX^{\top} \hat{\bm{z}}_t$ suddenly changes (with the components where the change occurs having about the same magnitude at the time of change, if the learning rate is small), which results into such an observed change in primal space. Additionally, in Figure \ref{fig:bothfig}, we run the iterative methods from Table \ref{tab:pen} (IRKSN, IRCR, IROSR and SRDI) (as well as IHT for comparison) on Example 1, and measure the recovery error $\| \hat{\bm{\w}} - \bm{w}^{(y)} \|$ as well as the sparsity $\| \hat{\bm{\w}} \|_0$ of the iterates. As we can see, only IRKSN can achieve 0 error, that is, full recovery in the noiseless setting. In addition, except IHT (which however fails to approach the true solution), no method is able to converge to a 3-sparse solution, which is the true degree of sparsity of the solution.

 \section*{Experiments}
 Below we present experimental results to evaluate the sparse recovery properties of IRKSN. Additional details on those experiments as well as further experiments are provided in the Appendix.

\paragraph*{Experimental Setting.}


We consider a simple linear regression setting with a correlated design matrix,  i.e. where the design matrix $\X$ is formed by $n$ i.i.d. samples from $d$ (we take $d=50$ here) correlated Gaussian random variables $\{X_1, .., X_d\}$ of zero mean and unit variance,  such that: $\forall i \in \{1, \ldots, d\}: \mathbb{E}[X_i] = 0, \mathbb{E}[X_i^2] = 1;$ and $\forall (i, j) \in \{1, \ldots, d\}^2, i\neq j: \mathbb{E}[X_i X_j] = \rho^{|i-j|}$.
More precisely, we generate each feature $X_i$ in an auto-regressive manner, from previous features, using a correlation $\rho \in [0, 1)$, in the following way: we have $X_1 \sim \mathcal{N}(0, 1)$ and $\sigma^2 = 1 - \rho^2$, and for all $j \in \{2, ..., d\}$: $X_{j+1} = \rho X_j + \epsilon_j$ where $\epsilon_j = \sigma * \Delta$, with $\Delta \sim \mathcal{N}(0, 1)$. Additionally, $\w$ is supported on a support, sampled uniformly at random, of $k=10$ non-zero entries, with each non-zero entry sampled from a normal distribution, and $\y$ is obtained with a noise vector $\bm{\epsilon}$ created from i.i.d. samples from a normal distribution, rescaled to enforce a given signal to noise ratio (SNR), as follows:  $\y = \X \w^* + \bm{\epsilon}$
with the signal to noise ratio defined as $\text{snr} = \frac{\| \X \w^*\|}{\|\bm{\epsilon}\|}$. We generate this dataset using the \texttt{make\_correlated\_data} function from the \texttt{benchopt} package \cite{moreau2022benchopt}. 
Such a dataset is commonly used to evaluate sparse recovery algorithms (see e.g. \cite{molinari2021iterative}), since it possesses correlated features, which is more challenging for sparse recovery (see e.g. the ElasticNet paper, which was motivated by such correlated datasets \cite{zou2005regularization}). In addition, the advantage of such synthetic dataset is that the support is known since it is generated, which therefore allows to evaluate the performance of the algorithms on support recovery, contrary to real-life datasets where a true sparse support of $\w$ is hypothetical (or at least often unknown). Additionally, we can notice that such dataset resembles our Example 1, as some features are generated from other features. We evaluate the performance of each final recovered model $\w$ using the F1 score on support recovery,  defined as follows: $\text{F1} = 2 \frac{P R}{P + R}$, with $P$ the precision and $R$ the recall of support recovery, which are defined as: $P = \frac{|\text{supp}(\w^*) \cap \text{supp}(\w) |}{| \text{supp}(\w)|}$ and
$R = \frac{|\text{supp}(\w^*) \cap \text{supp}(\w) |}{| \text{supp}(\w^*)|}$. Therefore, the F1 score allows to evaluate at the same time how much of the predicted nonzero elements are accurate, and how much of the actual support has been found. A higher F1 score indicates better identification of the true support. In each experiment (defined by a particular value of $n, \rho, \text{snr}$ and a given random seed for generating $X$, $\w^*$ and $\bm{\epsilon}$), and for each algorithm, we choose the hyperparameters from a grid-search, to attain the best F1-score (we give details on that grid in the Appendix). For all algorithms which need to set a value $k$ (IRKSN, KSN, IHT), we set $k$ to its true value $k=10$. In a realistic use-case, since the support is unknown, one may instead tune those hyper-parameters based on a hold-out validation set prediction mean squared error, but tuning those hyperparameters directly for best support F1 score, as we do, allows to evaluate the best potential support recovery capability of each algorithm (e.g. for Lasso it informs us that \textit{there exist} a certain $\lambda$, such that we can achieve such a support recovery score).
Each experiment is regenerated 5 times with different random seeds, and the average of the obtained best F1 scores, as well as their standard deviation, are reported in Figures \ref{fig:n}, \ref{fig:rho}, and \ref{fig:snr}, for various values of the dataset parameters, while the others are kept fixed.
\begin{figure}[!bht]
  \centering
  \subfigure[F1-score vs. $n$]{\includegraphics[scale=0.27]{./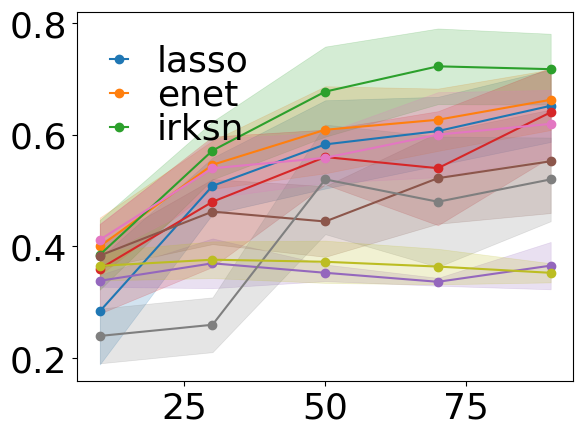}\label{fig:n}}\quad 
    \subfigure[F1-score vs. snr]{\includegraphics[scale=0.27]{./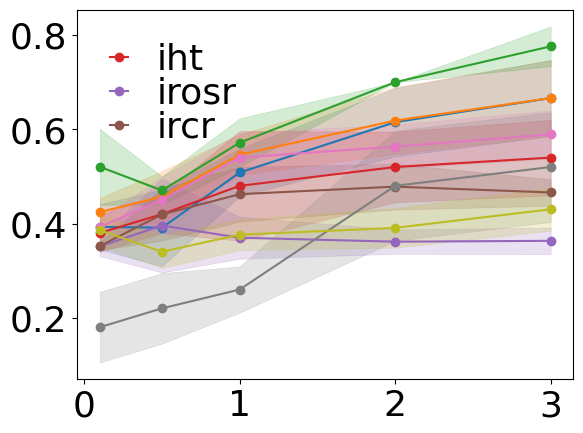}\label{fig:snr}}
        \subfigure[F1-score vs. $\rho$]{\includegraphics[scale=0.27]{./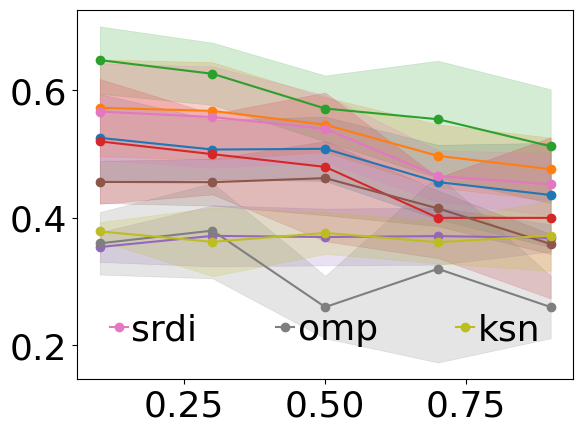}\label{fig:rho}}
                \subfigure[F1-score vs. t]{\includegraphics[scale=0.25]{./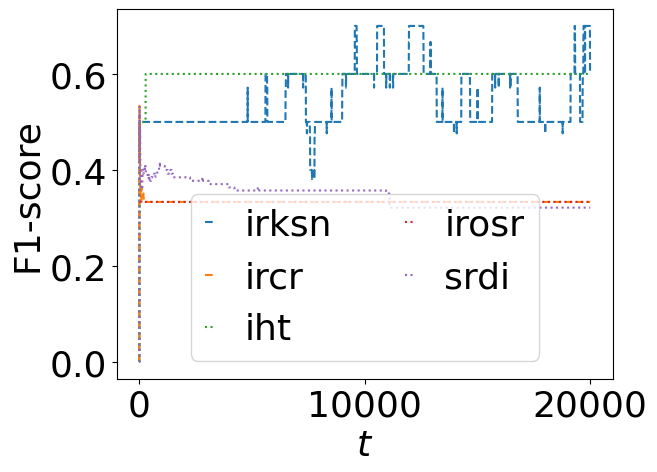}\label{fig:f1iter}}
   \caption{F1-score of support recovery in various settings}
   \label{fig:f1score}
 \end{figure}
In Figure \ref{fig:n}, we take $\rho=0.5$, $\text{snr}=1.$, and $n \in \{10, 30, 50, 70, 90 \}$. In Figure \ref{fig:snr}, we take $\rho=0.5$, $\text{snr} \in \{0.1, 0.5, 1., 2., 3.\}$, and $n=30$. In Figure \ref{fig:rho}, we take $\rho \in \{0.1, 0.3, 0.5, 0.7, 0.9 \}$, $\text{snr}=1.$, and $n=30$.  Additionally, we plot on Figure \ref{fig:f1iter} the evolution of the F1 score along training for iterative algorithms (i.e. algorithms where there is no grid search over a penalty $\lambda$, which are IHT, IRKSN, IRCR, IROSR, SRDI), in the case where $n=30, \text{snr}=3$, and $\rho=0.5$.

\paragraph*{Results.}

In all the experiments, as can be expected, we observe that support recovery is more successful when the signal to noise ratio is high, the number of samples is greater, and the correlation $\rho$ is smaller (for that latter point, this is due to the fact that highly correlated datasets are harder for sparse recovery, see e.g. \cite{zou2005regularization} for a discussion on the topic). But overall, we can observe that IRKSN consistently achieves better support recovery than other algorithms from Table \ref{tab:pen}. Also, we can observe on Figure \ref{fig:f1iter} that IHT and IRKSN maintain a good F1 score after many iterations, while other methods implicitly enforcing an $\ell_1$ norm regularization (IRCR, IROSR, SRDI) have poor F1 score in late training.

\section*{Conclusion}
In this paper, we introduced an iterative regularization method based on the $k$-support norm regularization, IRKSN, to complement usual methods based on the $\ell_1$ norm.
In particular, we gave some condition for sparse recovery with our method, that we analyzed in details and compared to traditional conditions for recovery with $\ell_1$ norm regularizers, through an illustrative example. We then gave an early stopping bound for sparse recovery with IRKSN with explicit constants in terms of the design matrix and the true sparse vector. Finally, we evaluated the applicability of IRKSN on several experiments. In future works, it would be interesting to analyze recovery with the $s$-support norm for general $s$, where $s$ is not necessarily equal to $k$: indeed, this setting would generalize both our work and works based on the $\ell_1$ norm. We leave this for future work.
\section*{Acknowledgements}
We would like to thank Velibor Bojkovi\'c for fruiteful discussions, as well as the anonymous reviewers for their useful comments. Xiao-Tong Yuan is funded in part by the National Key Research and Development Program of China under Grant No. 2018AAA0100400, and in part by the Natural Science Foundation of China (NSFC) under Grant No.U21B2049 and No.61936005.

\bigskip
\noindent 

\bibliography{aaai24}

\begin{thebibliography}{55}
\providecommand{\natexlab}[1]{#1}

\bibitem[{Abraham et~al.(2014)Abraham, Pedregosa, Eickenberg, Gervais, Mueller,
  Kossaifi, Gramfort, Thirion, and Varoquaux}]{Chamma_nilearn}
Abraham, A.; Pedregosa, F.; Eickenberg, M.; Gervais, P.; Mueller, A.; Kossaifi,
  J.; Gramfort, A.; Thirion, B.; and Varoquaux, G. 2014.
\newblock Machine learning for neuroimaging with scikit-learn.
\newblock \emph{Frontiers in neuroinformatics}, 8: 14.

\bibitem[{Argyriou, Foygel, and Srebro(2012)}]{argyriou2012sparse}
Argyriou, A.; Foygel, R.; and Srebro, N. 2012.
\newblock Sparse prediction with the $ k $-support norm.
\newblock \emph{Advances in Neural Information Processing Systems}, 25.

\bibitem[{Baptiste, Urruty, and Lemarechal(2001)}]{Baptiste01}
Baptiste, J.; Urruty, H.; and Lemarechal, C. 2001.
\newblock Fundamentals of convex analysis.

\bibitem[{Bauschke, Combettes et~al.(2011)}]{bauschke2011convex}
Bauschke, H.~H.; Combettes, P.~L.; et~al. 2011.
\newblock \emph{Convex analysis and monotone operator theory in Hilbert
  spaces}, volume 408.
\newblock Springer.

\bibitem[{Belilovsky et~al.(2015)Belilovsky, Gkirtzou, Misyrlis, Konova,
  Honorio, Alia-Klein, Goldstein, Samaras, and Blaschko}]{belilovsky15b}
Belilovsky, E.; Gkirtzou, K.; Misyrlis, M.; Konova, A.~B.; Honorio, J.;
  Alia-Klein, N.; Goldstein, R.~Z.; Samaras, D.; and Blaschko, M.~B. 2015.
\newblock Predictive sparse modeling of fMRI data for improved classification,
  regression, and visualization using the k-support norm.
\newblock \emph{Computerized Medical Imaging and Graphics}, 46: 40--46.

\bibitem[{Ben-Israel and Greville(2003)}]{ben2003generalized}
Ben-Israel, A.; and Greville, T.~N. 2003.
\newblock \emph{Generalized inverses: theory and applications}, volume~15.
\newblock Springer Science \& Business Media.

\bibitem[{Bertrand and Massias(2021)}]{bertrand2021anderson}
Bertrand, Q.; and Massias, M. 2021.
\newblock Anderson acceleration of coordinate descent.
\newblock In \emph{International Conference on Artificial Intelligence and
  Statistics}, 1288--1296. PMLR.

\bibitem[{Blumensath and Davies(2009)}]{blumensath2009}
Blumensath, T.; and Davies, M.~E. 2009.
\newblock Iterative hard thresholding for compressed sensing.
\newblock \emph{Applied and computational harmonic analysis}, 27(3): 265--274.

\bibitem[{Boyd, Boyd, and Vandenberghe(2004)}]{boyd2004convex}
Boyd, S.; Boyd, S.~P.; and Vandenberghe, L. 2004.
\newblock \emph{Convex optimization}.
\newblock Cambridge university press.

\bibitem[{Breheny(2022)}]{breheny2022}
Breheny, P. 2022.

\bibitem[{Bubeck et~al.(2015)}]{bubeck2015convex}
Bubeck, S.; et~al. 2015.
\newblock Convex optimization: Algorithms and complexity.
\newblock \emph{Foundations and Trends{\textregistered} in Machine Learning},
  8(3-4): 231--357.

\bibitem[{Cai, Osher, and Shen(2009)}]{cai2009linearized}
Cai, J.-F.; Osher, S.; and Shen, Z. 2009.
\newblock Linearized Bregman iterations for compressed sensing.
\newblock \emph{Mathematics of computation}, 78(267): 1515--1536.

\bibitem[{Chang and Lin(2011)}]{libsvm}
Chang, C.-C.; and Lin, C.-J. 2011.
\newblock {LIBSVM}: A Library for Support Vector Machines.
\newblock \emph{ACM Transactions on Intelligent Systems and Technology}, 2:
  27:1--27:27.
\newblock Software available at \url{http://www.csie.ntu.edu.tw/~cjlin/libsvm}.

\bibitem[{Chatterjee, Chen, and Banerjee(2014)}]{chatterjee2014}
Chatterjee, S.; Chen, S.; and Banerjee, A. 2014.
\newblock Generalized dantzig selector: Application to the k-support norm.
\newblock \emph{Advances in Neural Information Processing Systems}, 27.

\bibitem[{Chevalier et~al.(2021)Chevalier, Nguyen, Salmon, Varoquaux, and
  Thirion}]{chevalier2021decoding}
Chevalier, J.-A.; Nguyen, T.-B.; Salmon, J.; Varoquaux, G.; and Thirion, B.
  2021.
\newblock Decoding with confidence: Statistical control on decoder maps.
\newblock \emph{NeuroImage}, 234: 117921.

\bibitem[{Fang et~al.(2020)Fang, Fan, Sun, and Friedlander}]{fang2020}
Fang, H.; Fan, Z.; Sun, Y.; and Friedlander, M. 2020.
\newblock Greed meets sparsity: Understanding and improving greedy coordinate
  descent for sparse optimization.
\newblock In \emph{International Conference on Artificial Intelligence and
  Statistics}, 434--444. PMLR.

\bibitem[{Farrens et~al.(2020)Farrens, Grigis, El~Gueddari, Ramzi, Chaithya,
  Starck, Sarthou, Cherkaoui, Ciuciu, and Starck}]{farrens2020pysap}
Farrens, S.; Grigis, A.; El~Gueddari, L.; Ramzi, Z.; Chaithya, G.; Starck, S.;
  Sarthou, B.; Cherkaoui, H.; Ciuciu, P.; and Starck, J.-L. 2020.
\newblock PySAP: Python Sparse Data Analysis Package for multidisciplinary
  image processing.
\newblock \emph{Astronomy and Computing}, 32: 100402.

\bibitem[{Foucart and Rauhut(2013)}]{foucart2013}
Foucart, S.; and Rauhut, H. 2013.
\newblock An invitation to compressive sensing.
\newblock In \emph{A mathematical introduction to compressive sensing}, 1--39.
  Springer.

\bibitem[{Gkirtzou et~al.(2013)Gkirtzou, Honorio, Samaras, Goldstein, and
  Blaschko}]{gkirtzou2013fmri}
Gkirtzou, K.; Honorio, J.; Samaras, D.; Goldstein, R.; and Blaschko, M.~B.
  2013.
\newblock fMRI analysis of cocaine addiction using k-support sparsity.
\newblock In \emph{2013 IEEE 10th International Symposium on Biomedical
  Imaging}, 1078--1081. IEEE.

\bibitem[{Golub and Van~Loan(2013)}]{golub2013matrix}
Golub, G.~H.; and Van~Loan, C.~F. 2013.
\newblock \emph{Matrix computations}.
\newblock JHU press.

\bibitem[{Golub et~al.(1999)Golub, Slonim, Tamayo, Huard, Gaasenbeek, Mesirov,
  Coller, Loh, Downing, Caligiuri et~al.}]{golub1999molecular}
Golub, T.~R.; Slonim, D.~K.; Tamayo, P.; Huard, C.; Gaasenbeek, M.; Mesirov,
  J.~P.; Coller, H.; Loh, M.~L.; Downing, J.~R.; Caligiuri, M.~A.; et~al. 1999.
\newblock Molecular classification of cancer: class discovery and class
  prediction by gene expression monitoring.
\newblock \emph{science}, 286(5439): 531--537.

\bibitem[{Grasmair, Scherzer, and Haltmeier(2011)}]{grasmair2011}
Grasmair, M.; Scherzer, O.; and Haltmeier, M. 2011.
\newblock Necessary and sufficient conditions for linear convergence of
  $\ell_1$-regularization.
\newblock \emph{Communications on Pure and Applied Mathematics}, 64(2):
  161--182.

\bibitem[{Harris et~al.(2020)Harris, Millman, Van Der~Walt, Gommers, Virtanen,
  Cournapeau, Wieser, Taylor, Berg, Smith et~al.}]{harris2020array}
Harris, C.~R.; Millman, K.~J.; Van Der~Walt, S.~J.; Gommers, R.; Virtanen, P.;
  Cournapeau, D.; Wieser, E.; Taylor, J.; Berg, S.; Smith, N.~J.; et~al. 2020.
\newblock Array programming with NumPy.
\newblock \emph{Nature}, 585(7825): 357--362.

\bibitem[{Haxby et~al.(2001)Haxby, Gobbini, Furey, Ishai, Schouten, and
  Pietrini}]{haxby2001distributed}
Haxby, J.~V.; Gobbini, M.~I.; Furey, M.~L.; Ishai, A.; Schouten, J.~L.; and
  Pietrini, P. 2001.
\newblock Distributed and overlapping representations of faces and objects in
  ventral temporal cortex.
\newblock \emph{Science}, 293(5539): 2425--2430.

\bibitem[{Jacob, Obozinski, and Vert(2009)}]{jacob2009group}
Jacob, L.; Obozinski, G.; and Vert, J.-P. 2009.
\newblock Group lasso with overlap and graph lasso.
\newblock In \emph{Proceedings of the 26th annual international conference on
  machine learning}, 433--440.

\bibitem[{Jain, Tewari, and Kar(2014)}]{Jain14}
Jain, P.; Tewari, A.; and Kar, P. 2014.
\newblock On Iterative Hard Thresholding Methods for High-dimensional
  M-Estimation.
\newblock In \emph{Advances in Neural Information Processing Systems},
  volume~27.

\bibitem[{Jia and Yu(2010)}]{jia2010}
Jia, J.; and Yu, B. 2010.
\newblock On model selection consistency of the elastic net when p >> n.
\newblock \emph{Statistica Sinica}, 595--611.

\bibitem[{Kakade et~al.(2009)Kakade, Shalev-Shwartz, Tewari
  et~al.}]{kakade2009duality}
Kakade, S.; Shalev-Shwartz, S.; Tewari, A.; et~al. 2009.
\newblock On the duality of strong convexity and strong smoothness: Learning
  applications and matrix regularization.
\newblock \emph{Unpublished Manuscript, http://ttic. uchicago.
  edu/shai/papers/KakadeShalevTewari09. pdf}, 2(1): 35.

\bibitem[{Kooperberg(1997)}]{kooperberg1997statlib}
Kooperberg, C. 1997.
\newblock StatLib: an archive for statistical software, datasets, and
  information.
\newblock \emph{The American Statistician}, 51(1): 98.

\bibitem[{Lang(1995)}]{lang1995newsweeder}
Lang, K. 1995.
\newblock Newsweeder: Learning to filter netnews.
\newblock In \emph{Machine Learning Proceedings 1995}, 331--339. Elsevier.

\bibitem[{Matet et~al.(2017)Matet, Rosasco, Villa, and Vu}]{matet2017}
Matet, S.; Rosasco, L.; Villa, S.; and Vu, B.~L. 2017.
\newblock Don't relax: early stopping for convex regularization.
\newblock \emph{arXiv preprint arXiv:1707.05422}.

\bibitem[{McDonald, Pontil, and
  Stamos(2016{\natexlab{a}})}]{mcdonald2016fitting}
McDonald, A.; Pontil, M.; and Stamos, D. 2016{\natexlab{a}}.
\newblock Fitting spectral decay with the k-support norm.
\newblock In \emph{Artificial Intelligence and Statistics}, 1061--1069. PMLR.

\bibitem[{McDonald, Pontil, and Stamos(2014)}]{mcdonald2014spectral}
McDonald, A.~M.; Pontil, M.; and Stamos, D. 2014.
\newblock Spectral k-support norm regularization.
\newblock \emph{Advances in neural information processing systems}, 27.

\bibitem[{McDonald, Pontil, and Stamos(2016{\natexlab{b}})}]{mcdonald2016new}
McDonald, A.~M.; Pontil, M.; and Stamos, D. 2016{\natexlab{b}}.
\newblock New perspectives on k-support and cluster norms.
\newblock \emph{The Journal of Machine Learning Research}, 17(1): 5376--5413.

\bibitem[{Molinari et~al.(2021)Molinari, Massias, Rosasco, and
  Villa}]{molinari2021iterative}
Molinari, C.; Massias, M.; Rosasco, L.; and Villa, S. 2021.
\newblock Iterative regularization for convex regularizers.
\newblock In \emph{International conference on artificial intelligence and
  statistics}, 1684--1692. PMLR.

\bibitem[{Moreau et~al.(2022)Moreau, Massias, Gramfort, Ablin, Bannier,
  Charlier, Dagr{\'e}ou, La~Tour, Durif, Dantas et~al.}]{moreau2022benchopt}
Moreau, T.; Massias, M.; Gramfort, A.; Ablin, P.; Bannier, P.-A.; Charlier, B.;
  Dagr{\'e}ou, M.; La~Tour, T.~D.; Durif, G.; Dantas, C.~F.; et~al. 2022.
\newblock Benchopt: Reproducible, efficient and collaborative optimization
  benchmarks.
\newblock In \emph{NeurIPS-36th Conference on Neural Information Processing
  Systems}.

\bibitem[{Natarajan(1995)}]{natarajan1995sparse}
Natarajan, B.~K. 1995.
\newblock Sparse approximate solutions to linear systems.
\newblock \emph{SIAM journal on computing}, 24(2): 227--234.

\bibitem[{Nesterov(2009)}]{nesterov2009}
Nesterov, Y. 2009.
\newblock Primal-dual subgradient methods for convex problems.
\newblock \emph{Mathematical programming}, 120(1): 221--259.

\bibitem[{Osher et~al.(2016)Osher, Ruan, Xiong, Yao, and Yin}]{osher2016}
Osher, S.; Ruan, F.; Xiong, J.; Yao, Y.; and Yin, W. 2016.
\newblock Sparse recovery via differential inclusions.
\newblock \emph{Applied and Computational Harmonic Analysis}, 41(2): 436--469.

\bibitem[{Pace and Barry(1997)}]{pace1997sparse}
Pace, R.~K.; and Barry, R. 1997.
\newblock Sparse spatial autoregressions.
\newblock \emph{Statistics \& Probability Letters}, 33(3): 291--297.

\bibitem[{Parikh, Boyd et~al.(2014)}]{parikh2014proximal}
Parikh, N.; Boyd, S.; et~al. 2014.
\newblock Proximal algorithms.
\newblock \emph{Foundations and trends{\textregistered} in Optimization}, 1(3):
  127--239.

\bibitem[{Pedregosa et~al.(2011)Pedregosa, Varoquaux, Gramfort, Michel,
  Thirion, Grisel, Blondel, Prettenhofer, Weiss, Dubourg
  et~al.}]{pedregosa2011scikit}
Pedregosa, F.; Varoquaux, G.; Gramfort, A.; Michel, V.; Thirion, B.; Grisel,
  O.; Blondel, M.; Prettenhofer, P.; Weiss, R.; Dubourg, V.; et~al. 2011.
\newblock Scikit-learn: Machine learning in Python.
\newblock \emph{the Journal of machine Learning research}, 12: 2825--2830.

\bibitem[{Penrose(1956)}]{penrose1956best}
Penrose, R. 1956.
\newblock On best approximate solutions of linear matrix equations.
\newblock In \emph{Mathematical Proceedings of the Cambridge Philosophical
  Society}, volume~52, 17--19. Cambridge University Press.

\bibitem[{Rhee et~al.(2006)Rhee, Taylor, Wadhera, Ben-Hur, Brutlag, and
  Shafer}]{rhee2006genotypic}
Rhee, S.-Y.; Taylor, J.; Wadhera, G.; Ben-Hur, A.; Brutlag, D.~L.; and Shafer,
  R.~W. 2006.
\newblock Genotypic predictors of human immunodeficiency virus type 1 drug
  resistance.
\newblock \emph{Proceedings of the National Academy of Sciences}, 103(46):
  17355--17360.

\bibitem[{Rockafellar(1970)}]{Rockafellar70}
Rockafellar, R.~T. 1970.
\newblock Convex Analysis.

\bibitem[{Scheetz et~al.(2006)Scheetz, Kim, Swiderski, Philp, Braun, Knudtson,
  Dorrance, DiBona, Huang, Casavant et~al.}]{scheetz2006regulation}
Scheetz, T.~E.; Kim, K.-Y.~A.; Swiderski, R.~E.; Philp, A.~R.; Braun, T.~A.;
  Knudtson, K.~L.; Dorrance, A.~M.; DiBona, G.~F.; Huang, J.; Casavant, T.~L.;
  et~al. 2006.
\newblock Regulation of gene expression in the mammalian eye and its relevance
  to eye disease.
\newblock \emph{Proceedings of the National Academy of Sciences}, 103(39):
  14429--14434.

\bibitem[{Tibshirani(1996)}]{tibshirani1996}
Tibshirani, R. 1996.
\newblock Regression shrinkage and selection via the lasso.
\newblock \emph{Journal of the Royal Statistical Society: Series B
  (Methodological)}, 58(1): 267--288.

\bibitem[{Tropp and Gilbert(2007)}]{tropp2007}
Tropp, J.~A.; and Gilbert, A.~C. 2007.
\newblock Signal recovery from random measurements via orthogonal matching
  pursuit.
\newblock \emph{IEEE Transactions on information theory}, 53(12): 4655--4666.

\bibitem[{Vanschoren et~al.(2014)Vanschoren, Van~Rijn, Bischl, and
  Torgo}]{vanschoren2014openml}
Vanschoren, J.; Van~Rijn, J.~N.; Bischl, B.; and Torgo, L. 2014.
\newblock OpenML: networked science in machine learning.
\newblock \emph{ACM SIGKDD Explorations Newsletter}, 15(2): 49--60.

\bibitem[{Vaskevicius, Kanade, and Rebeschini(2019)}]{vaskevicius2019}
Vaskevicius, T.; Kanade, V.; and Rebeschini, P. 2019.
\newblock Implicit regularization for optimal sparse recovery.
\newblock \emph{Advances in Neural Information Processing Systems}, 32.

\bibitem[{Wang et~al.(2015)Wang, Kwon, Li, and Shim}]{wang2015recovery}
Wang, J.; Kwon, S.; Li, P.; and Shim, B. 2015.
\newblock Recovery of sparse signals via generalized orthogonal matching
  pursuit: A new analysis.
\newblock \emph{IEEE Transactions on Signal Processing}, 64(4): 1076--1089.

\bibitem[{Wright and Ma(2022)}]{Wright2022}
Wright, J.; and Ma, Y. 2022.
\newblock \emph{High-Dimensional Data Analysis with Low-Dimensional Models:
  Principles, Computation, and Applications}.
\newblock Cambridge University Press.

\bibitem[{Xiao(2009)}]{Xiao09}
Xiao, L. 2009.
\newblock Dual Averaging Method for Regularized Stochastic Learning and Online
  Optimization.
\newblock In \emph{Advances in Neural Information Processing Systems},
  volume~22.

\bibitem[{Zhao, Yang, and He(2022)}]{zhao2022}
Zhao, P.; Yang, Y.; and He, Q.-C. 2022.
\newblock High-dimensional linear regression via implicit regularization.
\newblock \emph{Biometrika}.

\bibitem[{Zou and Hastie(2005)}]{zou2005regularization}
Zou, H.; and Hastie, T. 2005.
\newblock Regularization and variable selection via the elastic net.
\newblock \emph{Journal of the royal statistical society: series B (statistical
  methodology)}, 67(2): 301--320.

\end{thebibliography}

\appendix
\onecolumn

\section{Notations and definitions}
\label{sec:notations}

First, we describe some of the notations that will be used in this Appendix. $[\bm{v}]_{S}$ denotes the restriction of a vector $\bm{v}$ to the support $S$, $[\bm{v}]_{i}$ denotes its $i$-th component, $\bm{M}^{\top}$ denotes the transpose of a matrix $\bm{M}$, and $\bm{M}^{\dagger}$ denotes the Moore-Penrose pseudo-inverse of $\bm{M}$ \cite{golub2013matrix}. $\bm{I}_{r \times r}$ denotes the identity matrix in $\R^{r, r}$. $[d]$ denotes the set $\{1, ..., d\}$, and $[_{k}^{d}]$ denotes the set of all the sets of $k$ elements from $\{1, ..., d\}$.
 $\bar{S}$ denotes the complement in $[d]$ of a support $S$, that is, all the integers from $[d]$ that are not in $S$. $\partial f$ denotes the \textit{subgradient} of a function $f$\cite{Rockafellar70}. $\text{conv} (\mathcal{\mathcal{A}})$ denotes the convex hull of a set of vectors $\mathcal{A} \subset \R^d$ (that is, the set of all the convex combinations of elements of $\mathcal{A}$). We then introduce the following definitions:

  \begin{appdefinition}[Legendre-Fenchel dual \cite{Rockafellar70}]\label{def:dual}
    For any function $f: \R^d \rightarrow \R \cup \{-\infty , +\infty\}$, the function $f^*: \R^d \rightarrow \R$ defined by
    $$ f^*(\y):= \sup_{\w}\{\langle \y, \w \rangle - f(\w)\}$$ is the Fenchel \textit{conjugate} or \textit{dual} to $f$.
  \end{appdefinition}

   \begin{appdefinition}[hard-thresholding operator \cite{blumensath2009}]
      We define the \textit{hard-thresholding} operator for all $\z \in \R^d$ as the set $\pi_{HT}(\z) \subset \R^d$ below:
   $$   \pi_{HT}(\z) : = \arg\min_{\w \in \R^d ~ \text{s.t.} \|\w\|_0 \leq k} \|\w - \z\|_2^2 $$
 \end{appdefinition}

 \begin{appremark}
   $\pi_{HT}(\z)$ keeps the $k$-largest values of $\z$ in magnitude: but if there is a tie between some values, several solutions exist to the problem above, and the set  $\pi_{HT}(\z)$ is not a singleton.
 \end{appremark}

 \begin{appexample}
With $k=1$:  $\pi_{HT}((2, 1)) = \{(2, 0)\}$ and  $\pi_{HT}((2, 2)) = \{(2, 0), (0, 2)\}$   
 \end{appexample}

\begin{appdefinition}[top-$k$ norm]\label{def:topk}
  We define the following top-$k$ norm $\|\cdot \|_{(k)}$, for all $\w \in \R^d$:
  $$ \|\w \|_{(k)}  = \|\pi^*_{HT}(\w)\|_2$$
  Where $\pi^*_{HT}(\w)$ denotes any element from  $\pi_{HT}(\w)$ (since they all have the same norm). In other words, $\|\w \|_{(k)} $ is the $\ell_2$ norm of the top-$k$ elements from $\w$.
\end{appdefinition}
 
\section{Recall on the conditions of recovery with $\ell_1$ regularization}\label{app:recall}
In this section, we briefly recall a conditions for sparse recovery with $\ell_1$ norm regularization from \citet{grasmair2011}, and why it is implied by Assumption \ref{ass:assl1}.  The authors of \cite{grasmair2011} proved in their Theorem 4.7 that such Assumption \ref{ass:gras} below is a necessary and sufficient condition for achieving a linear rate of convergence for Tikhonov regularization with a priori parameter choice. We present below such condition \ref{ass:gras}.

\begin{appassumption}[Cond. 4.3 \cite{grasmair2011}]\label{ass:gras}~
  \begin{enumerate}
    \item $\w^*$ solves the equation $\XX \w = \bm{y}$
 \item  \textit{ Strong source condition:} There exist some $\bm{\lambda} \in \R^n$ such that :
   $$ \text{(i):}~ \XX^{\top} \bm{\lambda} \in \partial \| \cdot\|_1 (\w^*) \quad \text{and (ii):} \quad |\langle \bmx_i, \bm{\lambda} \rangle | < 1 \quad \text{for} \quad i \not\in \text{supp}(\w^*) $$
  where $\text{supp}(\w^*)$ is the support of $\w^*$ (that is, the set of the coordinates of its nonzero elements)
  \item \textit{ Restricted injectivity:}  The restricted mapping $\XX_{\text{supp}(\w^*)}$ is injective.
  \end{enumerate}
\end{appassumption}
We now show that this Assumption \ref{ass:gras} is implied by Assumption \ref{ass:assl1}:

\begin{applemma}
  Assumption \ref{ass:assl1} $\implies$  Assumption \ref{ass:gras}.
\end{applemma}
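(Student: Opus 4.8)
The plan is to verify the three parts of Assumption~\ref{ass:gras} in turn, where parts (1) and (3) are immediate and part (2), the strong source condition, is the only one with content; it requires constructing an explicit dual certificate $\bm{\lambda}$. Part (1), namely $\XX\w^* = \bm{y}$, is exactly part (i) of Assumption~\ref{ass:assl1}. Part (3), restricted injectivity of $\XX_S$, is exactly part (ii) of Assumption~\ref{ass:assl1}. So everything hinges on part (2).

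For the strong source condition I would exhibit the candidate $\bm{\lambda} = (\XX_S^{\dagger})^{\top}\operatorname{sgn}(\w_S^*) \in \R^n$ and show it satisfies both (i) and (ii). First recall that the subdifferential $\partial\|\cdot\|_1(\w^*)$ consists of all $\bm{g}$ with $g_i = \operatorname{sgn}(w_i^*)$ for $i \in S$ and $|g_i| \leq 1$ for $i \in \bar{S}$; hence the requirement $\XX^{\top}\bm{\lambda} \in \partial\|\cdot\|_1(\w^*)$ splits into an on-support equality and an off-support inequality.

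For the on-support part, I would use the identity $(\XX_S^{\dagger})^{\top} = (\XX_S^{\top})^{\dagger}$ together with the fact that, since $\XX_S$ has full column rank by part (ii) of Assumption~\ref{ass:assl1}, we have $\XX_S^{\dagger}\XX_S = \bm{I}_{k\times k}$. This gives $\XX_S^{\top}\bm{\lambda} = (\XX_S^{\dagger}\XX_S)^{\top}\operatorname{sgn}(\w_S^*) = \operatorname{sgn}(\w_S^*)$, i.e. $(\XX^{\top}\bm{\lambda})_i = \operatorname{sgn}(w_i^*)$ for every $i \in S$. For the off-support part, for each $\ell \in \bar{S}$ I would move the transpose across the inner product to obtain $\langle \bmx_\ell, \bm{\lambda}\rangle = \langle \XX_S^{\dagger}\bmx_\ell, \operatorname{sgn}(\w_S^*)\rangle$; condition (iii) of Assumption~\ref{ass:assl1} then yields $|\langle \bmx_\ell, \bm{\lambda}\rangle| < 1$, which simultaneously certifies $|(\XX^{\top}\bm{\lambda})_\ell| \leq 1$ (so $\XX^{\top}\bm{\lambda}$ lies in the subdifferential) and the strict bound (ii) of the source condition.

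There is no genuine obstacle here: the entire argument reduces to choosing the correct dual certificate and then performing routine pseudoinverse manipulations, the only two facts needed being $(\XX_S^{\dagger})^{\top} = (\XX_S^{\top})^{\dagger}$ and $\XX_S^{\dagger}\XX_S = \bm{I}_{k\times k}$ under full column rank. The one subtlety worth flagging is that the source condition demands the \emph{strict} inequality $<1$ off-support, whereas subdifferential membership only needs $\leq 1$; both are supplied at once by the strict inequality in condition (iii), so no separate perturbation argument is required.
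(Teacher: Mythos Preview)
Your proposal is correct and follows essentially the same approach as the paper: both construct the dual certificate $\bm{\lambda} = (\XX_S^{\dagger})^{\top}\operatorname{sgn}(\w_S^*)$, use $\XX_S^{\dagger}\XX_S = \bm{I}_{k\times k}$ from injectivity to get the on-support equality, and transfer the transpose across the inner product to invoke condition (iii) for the strict off-support bound. If anything, your write-up is slightly more complete, since you explicitly dispatch parts (1) and (3) of Assumption~\ref{ass:gras} and note the subtlety that the strict inequality simultaneously handles subdifferential membership and the strong source condition.
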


\begin{proof}
 \textbf{Assume Assumption \ref{ass:assl1}}, and 
  take $\bm{\lambda} = (\XX_{S}^{\dagger})^{\top}\text{sign}(\w^*_S)$.
 We now have the following equality (A): $\XX_S^{\top} \bm{\lambda} = \XX_S^{\top} (\XX_{S}^{\dagger})^{\top}\text{sign}(\w^*_S) = (\XX_{S}^{\dagger}\XX_S)^{\top}\text{sign}(\w^*_S)  \overset{(a)}{=}  \text{sign}(\w^*_S) = [\partial \|\cdot  \|_1(\w^*)]_S$ where (a) follows by property of the pseudo-inverse and the fact that $\XX_S$ is injective (and therefore full column rank).

 Additionally, from condition 3 in Assumption \ref{ass:assl1}, we have:
  $$\max_{\ell \in \bar{S}}|\langle\XX_S^{\dagger}\bmx_\ell, \text{sgn}(\w^*_S) \rangle |<1 \implies \max_{\ell \in \bar{S}}|\langle\bmx_\ell, (\XX_S^{\dagger})^{\top}\text{sgn}(\w^*_S) \rangle |<1 \implies \max_{\ell \in \bar{S}}|\langle\bmx_\ell,\bm{\lambda}\rangle |<1 $$
  This inequality above corresponds to (ii) from the \textit{strong source condition} above (\ref{ass:gras} (2. (ii))). Therefore, (since that last inequality also implies that for all  $i \not\in S,\langle \bmx_\ell,\bm{\lambda} \rangle \in [-1, 1] = [\partial \| \cdot \|_1(\w^*)]_i $, which, combined with (A) implies \ref{ass:gras} (2. (i)), we finally have that this $\bm{\lambda}$ verifies the existence conditions from \ref{ass:gras}.
 
\end{proof}

\section{Proof of Theorem \ref{thm:thm}}
\label{proof:thm}

\begin{proof}[Proof of Theorem \ref{thm:thm}]
  Theorem \ref{thm:thm} follows by combining Lemma \ref{lem:sub} with Theorem \ref{thm:matet} from \cite{matet2017}: in particular, when plugging from Lemma \ref{lem:sub} the value (denoted by $\bm{\lambda}^*$ in Lemma \ref{lem:sub}  (2)) of the solution of the dual  problem of~\eqref{eq:IKSNN}, (denoted by $\bm{v}^*$ in Theorem \ref{thm:matet}) we obtain:
  $$ \|\bm{v}^* \| = \| (\XX_S^{\top})^{\dagger} \w_S^*\|$$
\end{proof}

\begin{applemma}\label{lem:sub}
  Under Assumptions \ref{ass:sol} and \ref{ass:ass}, we have, with  $\bm{\lambda}^* := - (\XX_S ^{\top})^{\dagger}\w_S^*$:
  \begin{enumerate}[label={(\arabic*)}]
  \item  $- \XX^{\top}\bm{\lambda^*} \in \partial R(\w^*)$
  \item  $\bm{\lambda}^*$ is solution to the dual problem  of the noiseless problem below:
    \begin{align}
 \text{($I_{ks}$-noiseless)}: \quad \quad &\min_{\w} R(\w) \nonumber\\
\text{s.t.} \quad & \XX \w = \y  \label{eq:IKSNN}
\end{align}
  \end{enumerate}
\end{applemma}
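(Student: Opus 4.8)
The plan is to reduce claim (2) to claim (1) by convex duality, so that essentially all of the work goes into the subgradient inclusion (1). The Lagrangian dual of~\eqref{eq:IKSNN} is $\max_{\bm{\lambda}}\,[-R^*(-\XX^\top\bm{\lambda})-\langle\bm{\lambda},\y\rangle]$, whose stationarity condition is $\y\in\XX\,\partial R^*(-\XX^\top\bm{\lambda})$. Because the objective is convex and the constraint $\XX\w=\y$ is affine and (by Assumption~\ref{ass:sol}) feasible, the KKT conditions are sufficient for optimality: a pair is primal--dual optimal whenever $\XX\w^*=\y$ and $-\XX^\top\bm{\lambda}^*\in\partial R(\w^*)$. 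By Fenchel--Young the latter inclusion is equivalent to $\w^*\in\partial R^*(-\XX^\top\bm{\lambda}^*)$, which combined with $\XX\w^*=\y$ yields exactly the dual stationarity condition above. Hence (2) follows from (1) together with the feasibility in Assumption~\ref{ass:sol}, and I concentrate on (1).

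For (1), I would first set up the subgradient of $R(\w)=\frac{1-\alpha}{2}(\|\w\|_k^{sp})^2+\frac{\alpha}{2}\|\w\|_2^2$. Writing $N=\|\cdot\|_k^{sp}$, the convex chain rule gives $\partial R(\w^*)=(1-\alpha)\,N(\w^*)\,\partial N(\w^*)+\alpha\w^*$. Since $\w^*$ is $k$-sparse with $|S|=k$, the single atom $\bm{v}_S=\w^*$ is feasible in the variational definition of the $k$-support norm, and the triangle inequality shows it is optimal, so $N(\w^*)=\|\w_S^*\|_2$. Using that the dual norm of the $k$-support norm is the top-$k$ norm $\|\cdot\|_{(k)}$ (Definition~\ref{def:topk}), the subdifferential of the norm at the nonzero point $\w^*$ is $\partial N(\w^*)=\{\bm{u}:\|\bm{u}\|_{(k)}\le 1,\ \langle\bm{u},\w^*\rangle=\|\w_S^*\|_2\}$. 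Thus (1) amounts to exhibiting such a $\bm{u}$ with $-\XX^\top\bm{\lambda}^*=(1-\alpha)\|\w_S^*\|_2\,\bm{u}+\alpha\w^*$.

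The key computation is to evaluate $\bm{g}:=-\XX^\top\bm{\lambda}^*=\XX^\top(\XX_S^\dagger)^\top\w_S^*$ coordinatewise: its $i$-th entry is $\langle\XX_S^\dagger\bmx_i,\w_S^*\rangle$. On the support I would use Assumption~\ref{ass:sol}: the minimum-$\ell_2$-norm solution is $\w_S^*=\XX_S^\dagger\y$, which lies in the row space of $\XX_S$, so the orthogonal projector satisfies $\XX_S^\dagger\XX_S\w_S^*=\w_S^*$; since for $j\in S$ the column $\bmx_j$ is a column of $\XX_S$, this yields $\bm{g}_j=w_j^*$. Consequently the candidate $\bm{u}:=(\bm{g}-\alpha\w^*)/\big((1-\alpha)\|\w_S^*\|_2\big)$ restricts on $S$ to $\bm{u}_S=\w_S^*/\|\w_S^*\|_2$, so $\|\bm{u}_S\|_2=1$ and $\langle\bm{u},\w^*\rangle=\langle\bm{u}_S,\w_S^*\rangle=\|\w_S^*\|_2$, giving the equality constraint of $\partial N(\w^*)$ for free. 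Off the support, $w_\ell^*=0$ forces $u_\ell=\langle\XX_S^\dagger\bmx_\ell,\w_S^*\rangle/\big((1-\alpha)\|\w_S^*\|_2\big)$.

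The main obstacle, and the only place the structural hypothesis is used, is verifying the dual-norm constraint $\|\bm{u}\|_{(k)}\le1$. Since $\sum_{j\in S}u_j^2=1$ already, the top-$k$ norm is automatically $\ge1$, so I must show it equals exactly $1$, i.e.\ that the $k$ largest magnitudes of $\bm{u}$ sit on $S$; equivalently $\max_{\ell\in\bar{S}}|u_\ell|\le\min_{j\in S}|u_j|$. Substituting the expressions above, this reduces to $\max_{\ell\in\bar{S}}|\langle\XX_S^\dagger\bmx_\ell,\w_S^*\rangle|\le(1-\alpha)\min_{j\in S}|w_j^*|$. As $\min_{j\in S}|w_j^*|=\min_{j\in S}|\langle\XX_S^\dagger\bmx_j,\w_S^*\rangle|$ (again because $\bm{g}_S=\w_S^*$), Assumption~\ref{ass:ass} gives the strict inequality without the factor $(1-\alpha)$, and the margin is absorbed by taking $\alpha$ small enough (as in the condition $\alpha<\eta/\|\w^*\|_\infty$ of Theorem~\ref{thm:thm}, using $\|\w^*\|_\infty\ge\min_{j\in S}|w_j^*|$). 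This closes (1), and by the duality argument of the first paragraph also (2). The two delicate points to get right are the identity $\XX_S^\dagger\XX_S\w_S^*=\w_S^*$ (precisely where the minimum-norm clause of Assumption~\ref{ass:sol} is needed, so that $\bm{g}_S=\w_S^*$ and not merely a projected version), and the fact that the top-$k$ constraint must hold \emph{with equality}, which is exactly what turns the subgradient condition into the magnitude-ordering inequality of Assumption~\ref{ass:ass}.
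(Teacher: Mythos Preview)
Your argument is correct. Both the reduction of (2) to (1) via Fenchel--Young/KKT and the verification of (1) through the magnitude-ordering inequality are sound, and the two delicate points you flag---the identity $\XX_S^\dagger\XX_S\w_S^*=\w_S^*$ from the minimum-norm clause, and the absorption of the $(1-\alpha)$ factor via $\alpha<\eta/\|\w^*\|_\infty$---are exactly the right ones.

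The route is close in spirit to the paper's but packaged differently. The paper does not work with $\partial(\tfrac12 N^2)=N\cdot\partial N$ and the dual-norm characterization $\partial N(\w^*)=\{\bm{u}:\|\bm{u}\|_{(k)}\le 1,\ \langle\bm{u},\w^*\rangle=N(\w^*)\}$. Instead it uses the Fenchel inverse relation $\partial f^*=(\partial f)^{-1}$ to flip the inclusion into $(1-\alpha)\w^*\in\partial(\tfrac12\|\cdot\|_{(k)}^2)(-\XX^\top\bm{\lambda}^*-\alpha\w^*)$, and then invokes a separate lemma identifying $\partial(\tfrac12\|\cdot\|_{(k)}^2)(\z)=\mathrm{conv}(\pi_{HT}(\z))$. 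From there the paper proves the strict ordering $[\,|\z|\,]_i>[\,|\z|\,]_\ell$ for $i\in S,\ell\in\bar S$ via the reverse triangle inequality (so the hard-thresholding is uniquely supported on $S$) and then computes $\pi_{HT}(\z)=(1-\alpha)\w^*$ using pseudo-inverse identities. Your approach avoids the auxiliary hard-thresholding lemma and the Fenchel inversion, obtaining the on-support coordinates $u_j=w_j^*/\|\w_S^*\|_2$ exactly (the $(1-\alpha)$ cancels), which makes the equality constraint $\langle\bm{u},\w^*\rangle=\|\w_S^*\|_2$ immediate; the paper's route instead keeps $\z=\bm{g}-\alpha\w^*$ intact and uses the triangle inequality on each on-support coordinate. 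Both paths reduce to the same magnitude-ordering inequality and use the $\alpha$-condition of Theorem~\ref{thm:thm} in the same way; yours is arguably the more elementary of the two.
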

\begin{proof}

  \textbf{Proof of (1):}\\

  We start by re-writing the condition $- \XX^{\top}\bm{\lambda} \in \partial R(\w^*)$ (for any given $\bm{\lambda}$) into a form easier to check:

  First, recall that $R(\w) =  \frac{1 - \alpha}{2}{\|\w \|^{sp}_{k}}^2  + \frac{\alpha}{2} {\|\w \|_2}^2 $.  We then have, for any $\bm{\lambda} \in \R^{n}$:
  \begin{align}
 \{-  \XX^{\top}\bm{\lambda} \in \partial R(\w^*) \}
                                                                 &\overset{}{\iff}
                                                                    \{ (1 - \alpha) \partial ( \frac{1}{2}{\|\cdot\|_{k}^{sp}}^2 )(\w^*)  \ni  - \XX^{\top} \bm{\lambda} - \alpha \w^*   \} \label{eq:thisone} \\
                                                                     &\overset{(a)}{\iff}
                                                                     \{ (1 - \alpha) \w^* \in \partial (\frac{1}{2} \|  \cdot \|_{(k)}^2 )(- \XX^{\top} \bm{\lambda} -  \alpha \w^*)  \}  \label{eq:bothsides} \\
                                                                         & \overset{(b)}{\iff} 
                                                                   \{   (1  -  \alpha ) \w^* \in \text{conv}(\pi_{HT}(-  \XX^{\top} \bm{\lambda} -  \alpha \w^*))  \} \label{eq:condtover}
  \end{align}
Where  $(a)$ follows from Proposition \ref{prop:reci} and Corollary \ref{cor:dualsq}, and $(b)$ from Lemma \ref{lem:subtopk}.

Let us now define $\bm{\lambda}^* :=-  (\XX_S ^{\top})^{\dagger}\w_S^*$. 
We then have:
\begin{align}
  \text{conv}(\pi_{HT}(- \XX^{\top} \bm{\lambda}^* - \alpha \w^*)) &= \text{conv}(\pi_{HT}(\XX^{\top} (\XX_S ^{\top})^{\dagger}\w_S^*   - \alpha \w^*)) \label{eq:befsimp}
         \end{align}
We now use the fact that :
\begin{enumerate}[label={(\Alph*)}]
\item  $\max_{\ell \in \bar{S}}| \langle \XX_S^{\dagger} \bmx_{\ell}, \w^*_{S}\rangle| < \min_{j \in S}| \langle \XX_S^{\dagger} \bmx_{j}, \w^*_{S}\rangle| $
  \item   $0< \alpha <  \frac{ \min_{j \in S}| \langle \XX_S^{\dagger} \bmx_{j}, \w^*_{S}\rangle| -  \max_{\ell \in \bar{S}}| \langle \XX_S^{\dagger} \bmx_{\ell}, \w^*_{S}\rangle| }{\|\w^*\|_{\infty}}$ (from the choice of $\alpha$ described in Theorem \ref{thm:thm})
\end{enumerate}
Which implies, for all $i \in S$, that:
\begin{align*}
  [|\XX (\XX_S^{\top})^{\dagger} \w^*_S  - \alpha \w^*|]_i &\overset{(a)}{\geq}  [|\XX (\XX_S^{\top})^{\dagger} \w^*_S|  - |\alpha \w^*|]_i \\
                                                              &=   [|\XX (\XX_S^{\top})^{\dagger} \w^*_S|]_i  - \alpha [|\w^*|]_i \\
                                                              & \geq    [|\XX (\XX_S^{\top})^{\dagger} \w^*_S|]_i  - \alpha \|\w^*\|_{\infty} \\
                                                         & \overset{(b)}{>}  [|\XX (\XX_S^{\top})^{\dagger} \w^*_S|]_i   -  \min_{j \in S}| \langle \XX_S^{\dagger} \bmx_{j}, \w^*_{S}\rangle| + \max_{\ell \in \bar{S}}| \langle \XX_S^{\dagger} \bmx_{\ell}, \w^*_{S}\rangle|\\
                                                         & \geq \max_{\ell \in \bar{S}}| \langle \XX_S^{\dagger} \bmx_{\ell}, \w^*_{S}\rangle|\\
                                                         &= \max_{\ell \in \bar{S}}  [|\XX (\XX_S^{\top})^{\dagger} \w^*_S  |]_{\ell}\\
    &\overset{(c)}{=} \max_{\ell \in \bar{S}}  [|\XX (\XX_S^{\top})^{\dagger} \w^*_S  - \alpha \w^*|]_{\ell}       
\end{align*}
Where  (a) follows from the reverse triangle inequality, (b) follows from (B), and (c) follows from the fact that the support of $\w^*$ is $S$ (so: $\forall j \in \bar{S}: ~ \w^*_j = 0$).

Therefore, for all $i \in S, \ell \in \bar{S}$:
$$ [|\XX (\XX_S^{\top})^{\dagger} \w^*_S  - \alpha \w^*|]_i >  [|\XX (\XX_S^{\top})^{\dagger} \w^*_S  - \alpha \w^*|]_{\ell} $$

This allows us to simplify \eqref{eq:befsimp}, given that the hard-thresholding operation selects the top $k$-components of a vector (in absolute value), and using the fact that we assumed that $S$ is of size $k$ (i.e. $|S|=k$) (so the $\text{conv}$ operation disappears here because since the inequality above is strict, there are no
``ties'' when computing the top-$k$ components (in absolute value); in other words, the convex hull of a singleton is that singleton itself): 

Therefore, for all $i \in [d]$:

\begin{align}
  [\text{conv}\left(\pi_{HT}(- \XX^{\top} \bm{\lambda}^* - \alpha \w^*)\right)]_i &=  \left\{   \begin{array}{l}
                                                            \langle \bmx_i, (\XX_S ^{\top})^{\dagger}\w_S^* \rangle  - \alpha \w_i^* \quad \text{if} \quad i \in S     \\
                                                               0 \quad \text{if}  \quad i \in \bar{S}\\
                                                                                             \end{array}\right.\\
  &= \left\{   \begin{array}{l}
                                                            \langle \bmx_i,  (\XX_S ^{\top})^{\dagger}\XX_S^{\dagger} \bm{y}  \rangle  - \alpha \w_i^* \quad \text{if} \quad i \in S     \\
                                                               0 \quad \text{if}  \quad i \in \bar{S}\\
               \end{array}\right.\\
  &\overset{(a)}{=} \left\{   \begin{array}{l}
                                                            [\XX_S^{\dagger} \bm{y}]_i  - \alpha \w_i^* \quad \text{if} \quad i \in S     \\
                                                               0 \quad \text{if}  \quad i \in \bar{S}\\
               \end{array}\right.\\
    &\overset{(b)}{=} \left\{   \begin{array}{l}
                                                            \w^*_i  - \alpha \w_i^* \quad \text{if} \quad i \in S     \\
                                                               0 \quad \text{if}  \quad i \in \bar{S}\\
                 \end{array}\right.\\
      &= \left\{   \begin{array}{l}
                                                            (1 - \alpha)\w^*_i  \quad \text{if} \quad i \in S     \\
                                                               0 \quad \text{if}  \quad i \in \bar{S}\label{eq:last}
                                                                   \end{array}\right.
  \end{align}

  Where (a) follows  from the following property of the pseudo-inverse for a matrix $\bm{M}$, applied to $\bm{M} = \XX_{S}^{\top}$: $\bm{M}\bm{M}^{\dagger} (\bm{M}^{\top})^{\dagger} = (\bm{M}^{\top})^{\dagger}$.
 (This property can be understood using the Singular Value Decomposition (SVD) expression for the pseudo-inverse \cite{golub2013matrix}: with $\bm{M} = \bm{U} \bm{D} \bm{V}^{\top}$, we have: $\bm{M}\bm{M}^{\dagger} (\bm{M}^{\top})^{\dagger} = \bm{U} \bm{D} \bm{V}^{\top}\bm{V} \bm{D}^{-1} \bm{U}^{\top} \bm{U} \bm{D}^{-1} \bm{V}^{\top} = \bm{U} \bm{D}^{-1}\bm{V}^{\top} = (\bm{M}^{\top})^{\dagger}$), and (b) follows from the fact that   $\w^*$ is the min $\ell_2$ norm solution on its support $S$ (as we assumed in Assumption \ref{ass:sol}), so $\XX_{S}^{\dagger} \bm{y}   = \w_S^*$ (III, 2, Corr. 3, \cite{ben2003generalized}, \cite{penrose1956best}).

Therefore, aggregating \eqref{eq:last} for all indices, we finally obtain: $$ \text{conv}\left(\pi_{HT}(- \XX^{\top} \bm{\lambda}^* - \alpha \w^*)\right) = (1 - \alpha) \w^*$$
That is, $\bm{\lambda}^*$ verifies \eqref{eq:condtover}.

So to sum up, under Assumptions \ref{ass:ass} and \ref{ass:sol}, we have that, for $\bm{\lambda}^* := -  (\XX_S ^{\top})^{\dagger}\w_S^*$: $- \XX^{\top}\bm{\lambda}^* \in \partial R(\w^*) $.

\textit{Note:} In addition, since  \eqref{eq:condtover} is equivalent to \eqref{eq:thisone}, plugging that value of $\bm{\lambda}^*$ into \eqref{eq:thisone}  we also have:

\begin{equation}
  \label{eq:useful}
 (1 - \alpha) \partial ( \frac{1}{2}{\|\cdot\|_{k}^{sp}}^2 ) (\w^*)  \ni  \XX^{\top} (\XX_S ^{\top})^{\dagger} \w_S^* - \alpha \w^*   
\end{equation}

(This latter equation will be useful in the proof of (2) below)

\textbf{Proof of (2):}\\

We now turn to proving the second part (i.e. (2)) of  Lemma \ref{lem:sub}.

  As described in \cite{matet2017},  the dual problem of ~\eqref{eq:IKSNN} can be written as (see e.g. Definition 15.19 in \cite{bauschke2011convex}):
  \begin{equation}\label{eq:dualpb}
\min_{\vv} R^*(-\XX^{\top} \vv) + \langle \bm{y}, \vv\rangle    
  \end{equation}

  where $R^*$ denotes the Fenchel Dual of $R$ (see Definition \ref{def:dual} ). 

  Let us define, for all $\bm{v} \in \R^{n}$: $f(\bm{v}) = R^*(-\XX^{\top} \bm{v})$

  The first order optimality condition of problem \eqref{eq:dualpb} can be written as:
  $$ \partial f(\bm{v}) + \y \ni \bm{0}$$
  Which is equivalent to:
    $$ - \partial f(\bm{v}) \ni  \y$$

  Therefore, if we find $\bm{v}$ such that the expression above is verified, then that $\bm{v}$ is solution of \eqref{eq:dualpb}.

  Now, from Theorem 23.9 in \cite{Rockafellar70}, we have that: $-\XX \partial R^*(-\XX^{\top} \vv) \subset \partial f(\bm{v}) $ (that is, the subgradient verifies a similar chain rule as the usual gradient, in one direction of inclusion).

  Note now that since $R$ is $\alpha$-strongly convex (due to the squared $\ell_2$ norm term), $R^*$ is differentiable and $\alpha$-smooth \cite{kakade2009duality} and therefore, its gradient is well defined, so we can rewrite $\partial R^*$ into $\nabla R^*$ (the subgradient is a singleton).
  
  Now, take $\bm{v}^{*}: = -  (\XX_S ^{\top})^{\dagger}\w_S^*$.

  Let us compute $\nabla R^*(-\XX^{\top} \bm{v}^{*} ) =  \nabla R^*(\XX^{\top} (\XX_S ^{\top})^{\dagger}\w_S^*) $.

  Let us denote $ \bm{z} : = \nabla R^*(\XX^{\top}  (\XX_S ^{\top})^{\dagger}\w_S^*) $. From \ref{prop:reci}, we have the following equivalences:
  \begin{align}
    & \XX^{\top}  (\XX_S ^{\top})^{\dagger}\w_S^* \in \partial R(\z) \nonumber\\
    & \iff \XX^{\top}  (\XX_S ^{\top})^{\dagger}\w_S^*  \in (1 - \alpha) \partial (\frac{1}{2}{\|\cdot \|^{sp}_k}^2)(\z) + \alpha \z \nonumber\\
                & \iff \XX^{\top} (\XX_S ^{\top})^{\dagger}\w_S^*  -  \alpha \z \in (1 - \alpha) \partial (\frac{1}{2}{\|\cdot \|^{sp}_k}^2)(\z) \label{eq:tofill}
  \end{align}
  Now, we know from \eqref{eq:useful} that taking $\z := \w^*$  satisfies expression \eqref{eq:tofill}.
  Therefore:
  $  \nabla R^*(-\XX^{\top} \bm{v}^{*}) = \w^*$

  Now, we can see that the proof is complete, since we know from Assumption \ref{ass:sol} that $\bm{y} = \XX \w^*$. So using the above, we have:
  \begin{equation*}
    \bm{y} = \XX \w^* = \XX \nabla R^*(-\XX^{\top} \bm{v}^{*}) \in \{ \XX \nabla R^*(-\XX^{\top} \bm{v}^{*})\} = \XX \partial R^*(-\XX^{\top} \bm{v}^{*}) \subset - \partial f(\bm{v}^*)
  \end{equation*}
  So to sum up, we have that: $ \bm{y} \in - \partial f(\bm{v}^*) $,
  which means that $\bm{v}^* = -  (\XX_S ^{\top})^{\dagger}\w_S^*$ is solution  of  the dual problem of ~\eqref{eq:IKSNN}.

\end{proof}

\begin{apptheorem}[\cite{matet2017}]\label{thm:matet}Let $\delta\in\left]0,1\right]$ and let  $(\hat{\w}_t)_{t\in\mathbb{N}}$  be  the sequence generated by ADGD (cf. \cite{matet2017}).  Assume that there exists $\bm{\lambda} \in\mathbb{R}^n$ such that $-\XX^T\bm{\lambda}\in\partial R(\w^*)$.
Set $a=4\|\XX\|^{-1}$ and $b=2\|\XX\| \|\bm{v}^*\|/\alpha$, where $\bm{v}^*$ is a solution of the dual  problem of~\eqref{eq:IKSNN}. Then, for every $t\geq 2$,
\begin{equation}
\| \hat{\w}_t -\w^*\|  \leq a t\delta + b{t}^{-1}.
\end{equation}
In particular, choosing $t_\delta=\lceil c \delta^{-1/2}\rceil $ for some $c>0$,
\begin{equation}
\| \hat{\w}_t -\w^*\|  \leq  \big[a(c+1)+bc^{-1}\big] \delta^{1/2}.
\end{equation}
\end{apptheorem}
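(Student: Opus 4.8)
The plan is to prove the general ADGD bound by recognizing the scheme as Nesterov-accelerated gradient descent applied to the smooth dual of \eqref{eq:IKSNN}, and then splitting the error into an \emph{optimization} part and a \emph{noise-propagation} part via a noiseless ``shadow'' run. Write $\Phi(\vv) := R^*(-\XX^\top\vv) + \langle\y,\vv\rangle$ for the noiseless dual objective and $\Phi^\delta(\vv) := \Phi(\vv) + \langle\eepsilon,\vv\rangle$ for the noisy one, where $\eepsilon = \y^\delta - \y$, $\|\eepsilon\|\le\delta$. Since $R$ is $\alpha$-strongly convex, $R^*$ is $\tfrac1\alpha$-smooth, so $\Phi$ is $L$-smooth with $L=\|\XX\|^2/\alpha$, and the step $\gamma=\alpha\|\XX\|^{-2}=1/L$ is exactly the FISTA step; the primal recovery $\hat\w_t=\nabla R^*(-\XX^\top\hat\z_t)$ is the usual prox identity. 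Letting $(\w_t,\vv_t,\z_t)$ denote the iterates of the \emph{same} algorithm fed with the clean target $\y$ (i.e.\ run on $\Phi$), I would use
\[ \|\hat\w_t-\w^*\| \le \|\hat\w_t-\w_t\| + \|\w_t-\w^*\|, \]
so that the two summands produce the $at\delta$ and $bt^{-1}$ terms respectively.

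For the optimization term I would convert dual suboptimality into primal distance using strong convexity. A short Bregman computation, using $\XX\w^*=\y$ (Assumption~\ref{ass:sol}) and the Fenchel--Young equality at $\w_t=\nabla R^*(-\XX^\top\vv_t)$, shows the dual gap equals the Bregman divergence of $R$, whence
\[ \tfrac{\alpha}{2}\|\w_t-\w^*\|^2 \le \Phi(\vv_t)-\Phi(\vv^*). \]
Plugging in the standard accelerated rate $\Phi(\vv_t)-\Phi(\vv^*)\le 2L\|\vv_0-\vv^*\|^2/(t+1)^2$ with $\vv_0=0$ and $L=\|\XX\|^2/\alpha$ gives $\|\w_t-\w^*\|\le 2\|\XX\|\,\|\vv^*\|/(\alpha(t+1))\le bt^{-1}$, matching $b=2\|\XX\|\,\|\vv^*\|/\alpha$ (with $\|\vv^*\|=\|(\XX_S^\top)^\dagger\w_S^*\|$ supplied by Lemma~\ref{lem:sub}).

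The noise term is the crux. The two runs start from the same point and differ only by the constant gradient shift $\nabla\Phi^\delta-\nabla\Phi\equiv\eepsilon$, so writing $d^z_t:=\hat\z_t-\z_t$ the goal is the \emph{linear} accumulation bound $\|d^z_t\|\le 4\gamma t\delta$; then $\tfrac1\alpha$-Lipschitzness of $\nabla R^*$ composed with $-\XX^\top$ yields $\|\hat\w_t-\w_t\|\le\tfrac{\|\XX\|}{\alpha}\|d^z_t\|\le 4\tfrac{\|\XX\|}{\alpha}\gamma t\delta = 4\|\XX\|^{-1}t\delta = at\delta$, using $\gamma=\alpha\|\XX\|^{-2}$. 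The difficulty is that the gradient-step map $G=\mathrm{Id}-\gamma\nabla\Phi$ is only nonexpansive, and a naive triangle inequality through the momentum update $\hat\vv_{t+1}=\hat\z_t+\beta_t(\hat\z_t-\hat\z_{t-1})$ with $\beta_t=(\theta_t-1)/\theta_{t+1}\to 1$ gives a recursion $\|d^v_{t+1}\|\lesssim 2\|d^v_t\|+\|d^v_{t-1}\|$ whose solution grows like $(1+\sqrt2)^t$ --- exponential, hence useless. I would instead track the error vectors with their signs and use that $G$ is \emph{firmly} nonexpansive (a $1/L$ gradient step on a convex function, hence $\tfrac12$-averaged): the constant shift then contributes exactly $-\gamma\eepsilon$ per momentum step, and a Lyapunov/telescoping argument tailored to the $\theta_t$ coefficients (as in the stability analysis of inertial forward--backward schemes underlying \cite{matet2017}) shows these injected shifts accumulate additively rather than multiplicatively, yielding the $O(\gamma t\delta)$ growth. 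This stability estimate is where the specific algebra of the $\theta_t$ sequence must genuinely be exploited.

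Combining the two bounds gives $\|\hat\w_t-\w^*\|\le at\delta+bt^{-1}$ for $t\ge2$, the first claim. The ``in particular'' statement is then immediate: with $t_\delta=\lceil c\delta^{-1/2}\rceil$ and $\delta\in\,]0,1]$ one has $c\delta^{-1/2}\le t_\delta\le(c+1)\delta^{-1/2}$, so $at_\delta\delta\le a(c+1)\delta^{1/2}$ and $bt_\delta^{-1}\le bc^{-1}\delta^{1/2}$, summing to $[a(c+1)+bc^{-1}]\delta^{1/2}$. I expect the optimization rate and the final balancing to be routine, and the linear-accumulation stability bound for the accelerated noisy iterates to be the single substantial obstacle.
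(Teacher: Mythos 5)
Your scaffolding — dual smoothness $L=\|\XX\|^2/\alpha$ with step $\gamma=1/L$, the prox identity $\hat{\w}_t=\nabla R^*(-\XX^\top\hat{\bm{z}}_t)$, the Bregman/strong-convexity conversion yielding the $bt^{-1}$ term, and the final balancing at $t_\delta=\lceil c\delta^{-1/2}\rceil$ — matches the proof in \cite{matet2017}, which the paper itself only cites. But the noise term is exactly where your argument breaks, and the break is not repairable: the stability estimate $\|\hat{\bm{z}}_t-\bm{z}_t\|\le 4\gamma t\delta$ that you hope to extract from firm nonexpansiveness plus a Lyapunov argument is simply false for the accelerated scheme. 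A constant gradient shift $\eepsilon$ pushed through the momentum recursion with $\beta_t=(\theta_t-1)/\theta_{t+1}\to1$ satisfies, in a flat direction (toy case $\Phi\equiv0$, clean run frozen at $\vv_0$), the second-difference equation $\hat{\bm{z}}_{t+1}-2\hat{\bm{z}}_t+\hat{\bm{z}}_{t-1}\approx-\gamma\eepsilon$, whose solution grows like $\gamma\delta t^2/2$. Firm nonexpansiveness of the gradient-step map tames your naive $(1+\sqrt2)^t$ recursion only down to $\Theta(\gamma\delta t^2)$, never to linear; this quadratic amplification of deterministic gradient errors by accelerated methods is a known phenomenon (Devolder--Glineur--Nesterov inexact-oracle analysis). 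Feeding the quadratic drift through your Lipschitz conversion $\|\hat{\w}_t-\w_t\|\le(\|\XX\|/\alpha)\|\hat{\bm{z}}_t-\bm{z}_t\|$ would yield a noise term of order $t^2\delta/\|\XX\|$, which at $t_\delta=\lceil c\delta^{-1/2}\rceil$ is $O(1)$: the iterate-deviation (shadow-run) decomposition cannot prove the theorem at all.

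The actual proof avoids comparing iterates and works at the level of dual function values, where the square root coming from strong convexity halves the exponent of the drift. Apply the FISTA estimate for the noisy dual $\Phi^\delta$ against the \emph{fixed} anchor $\vv^*$, the clean dual solution (legitimate because the estimate-sequence bound holds for an arbitrary comparison point, which matters since $\Phi^\delta$ need not attain its infimum); then
\begin{align*}
\frac{\alpha}{2}\|\hat{\w}_t-\w^*\|^2 \;\le\; \Phi(\hat{\bm{z}}_t)-\Phi(\vv^*) \;\le\; \frac{2L\|\vv^*\|^2}{(t+1)^2} + \langle \eepsilon,\, \vv^*-\hat{\bm{z}}_t\rangle ,
\end{align*}
and control $\|\hat{\bm{z}}_t\|$ by the \emph{quadratic} bound $O(\|\vv^*\|+\gamma\delta t^2)$ — which is all that is needed. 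Taking square roots turns the drift contribution $\gamma\delta^2t^2$ into $\sqrt{2\gamma/\alpha}\;t\delta=\sqrt2\,\|\XX\|^{-1}t\delta$, and the cross term $\sqrt{2\delta\|\vv^*\|/\alpha}$ is absorbed by AM--GM into a multiple of $\|\XX\|^{-1}t\delta+\|\XX\|\|\vv^*\|/(\alpha t)$, recovering $at\delta+bt^{-1}$ with the stated constants $a=4\|\XX\|^{-1}$, $b=2\|\XX\|\|\vv^*\|/\alpha$. Your derivation of the $bt^{-1}$ term and the concluding substitution of $t_\delta$ are correct and coincide with this route; the missing idea is that the $at\delta$ term is \emph{not} an iterate-stability statement in the dual but the square root of a quadratically drifting dual gap.
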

\begin{proof}
  Proof in \cite{matet2017}
\end{proof}

\section{Useful Results}
  Here we present some lemmas and theorems that are used in the proofs above:

  \begin{apptheorem}[Corollary 4.3.2, \cite{Baptiste01}]\label{thm:baptiste}
    Let $f_1, ..., f_m$ be $m$ convex functions from $\R^d$ to $\R$ and define
    $$ f:= \max \{ f_1 ,..., f_m\} .$$
    Denoting by $I(\w) := \{ i: f_i(\w) = f(\w)\}$ the active index-set, we have:
    $$ \partial f(\w) = \text{conv} (\cup \partial f_i(\w): i \in I(\w))$$
  \end{apptheorem}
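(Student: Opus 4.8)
The plan is to prove the two set-inclusions separately, with the forward inclusion being elementary and the reverse inclusion resting on a directional-derivative computation combined with the fact that a closed convex set is uniquely determined by its support function.

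First I would establish $\text{conv}(\cup_{i \in I(\w)} \partial f_i(\w)) \subseteq \partial f(\w)$. Fix any active index $i \in I(\w)$ and any $\g \in \partial f_i(\w)$. Using the subgradient inequality for $f_i$, together with the two facts that $f \geq f_i$ everywhere and $f(\w) = f_i(\w)$ (because $i$ is active), I obtain $f(\y) \geq f_i(\y) \geq f_i(\w) + \langle \g, \y - \w\rangle = f(\w) + \langle \g, \y - \w\rangle$ for all $\y$, so $\g \in \partial f(\w)$. Hence each active $\partial f_i(\w)$ lies in $\partial f(\w)$; since $\partial f(\w)$ is convex, the convex hull of the union of the active subdifferentials is contained in it as well.

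For the reverse inclusion I would pass to support functions $\sigma_C(\uu) := \sup_{\g \in C}\langle \g, \uu\rangle$. Recall that for a finite-valued convex function $h$ on $\R^d$ the subdifferential $\partial h(\w)$ is a nonempty compact convex set whose support function is the one-sided directional derivative, $\sigma_{\partial h(\w)}(\uu) = h'(\w;\uu)$. The crux of the argument is the identity
$$f'(\w; \uu) = \max_{i \in I(\w)} f_i'(\w; \uu).$$
I would prove it as follows: the inequality $\geq$ is immediate from $f \geq f_i$ and $f(\w)=f_i(\w)$ for active $i$; for $\leq$, continuity of the finitely many $f_j$ forces every inactive index (those with $f_j(\w)<f(\w)$) to stay strictly below $f(\w)$ for all sufficiently small $t>0$, so $f(\w+t\uu)=\max_{i\in I(\w)}f_i(\w+t\uu)$ near $t=0$, and dividing by $t$ and letting $t\to 0^+$ (using that the maximum of finitely many convergent difference quotients tends to the maximum of the limits) yields the bound. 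This directional-derivative identity is the main obstacle, since it is exactly where the finiteness of the index set and the strict-inactivity/continuity argument are needed.

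Finally I would assemble the pieces. Using that the support function of a union is the pointwise maximum of the support functions, and that passing to the convex hull leaves the support function unchanged, I would conclude that for every direction $\uu$,
$$\sigma_{\partial f(\w)}(\uu) = f'(\w;\uu) = \max_{i\in I(\w)} f_i'(\w;\uu) = \max_{i \in I(\w)} \sigma_{\partial f_i(\w)}(\uu) = \sigma_{\text{conv}(\cup_{i\in I(\w)}\partial f_i(\w))}(\uu).$$
Both $\partial f(\w)$ and $\text{conv}(\cup_{i\in I(\w)}\partial f_i(\w))$ are closed convex sets — the latter is in fact compact, being the convex hull of a finite union of compact subdifferentials in $\R^d$ — so having identical support functions they must coincide. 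This delivers the reverse inclusion and completes the proof.
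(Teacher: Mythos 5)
Your proof is correct and complete: the forward inclusion is the standard elementary argument, and the reverse inclusion via the directional-derivative identity $f'(\w;\uu)=\max_{i\in I(\w)}f_i'(\w;\uu)$ (justified by strict inactivity of the non-active indices near $\w$, which your continuity argument handles properly), combined with the fact that compact convex subsets of $\R^d$ are determined by their support functions, is exactly the classical route. The paper gives no proof of its own here --- it imports the result by citation from \cite{Baptiste01}, whose proof is essentially the argument you wrote --- so your attempt matches the approach of the cited source.
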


  \begin{proof}
    Proof in \cite{Baptiste01}.
  \end{proof}
  
  \begin{applemma}[Subgradient of the half-squared top-$k$ norm]\label{lem:subtopk}
    Let $n$ be the (half-squared) top $k$-norm: $n(\w) = \frac{1}{2}\|\w\|^2_{(k)}$. We have:
    $$ \partial n(\w) = \text{conv}(\pi_{HT}(\w))$$
  \end{applemma}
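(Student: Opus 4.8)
The plan is to write $n$ as a finite maximum of smooth convex functions and then invoke the subdifferential formula for a max, Theorem~\ref{thm:baptiste}. For each $S\subseteq[d]$ with $|S|=k$, let $\w^{(S)}\in\R^d$ be the vector that coincides with $\w$ on the coordinates in $S$ and vanishes off $S$, and set $f_S(\w):=\tfrac12\|\w^{(S)}\|_2^2=\tfrac12\sum_{i\in S}w_i^2$. Since by Definition~\ref{def:topk} the quantity $\|\w\|_{(k)}$ is the $\ell_2$ norm of the $k$ largest entries of $\w$ in magnitude, and since enlarging $S$ never decreases $\|\w^{(S)}\|_2$, one has $\|\w\|_{(k)}^2=\max_{|S|=k}\|\w^{(S)}\|_2^2$, so that $n(\w)=\max_{|S|=k}f_S(\w)$. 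This exhibits $n$ as the maximum of the finitely many $\binom{d}{k}$ differentiable convex functions $f_S$, each with gradient $\nabla f_S(\w)=\w^{(S)}$.

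Next I would apply Theorem~\ref{thm:baptiste} directly. Writing $I(\w):=\{S:|S|=k,\ f_S(\w)=n(\w)\}$ for the active index set and using that each $\partial f_S(\w)=\{\w^{(S)}\}$ is a singleton (the $f_S$ are quadratics, hence smooth), the theorem gives
$$\partial n(\w)=\text{conv}\Big(\bigcup_{S\in I(\w)}\partial f_S(\w)\Big)=\text{conv}\big(\{\w^{(S)}:S\in I(\w)\}\big).$$

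It then remains to prove the key set equality $\{\w^{(S)}:S\in I(\w)\}=\pi_{HT}(\w)$. Here I would use the complementary-norm identity $\|\w-\w^{(S)}\|_2^2=\|\w\|_2^2-\|\w^{(S)}\|_2^2$, valid for any $S$, which shows that maximizing $\|\w^{(S)}\|_2^2$ over $|S|=k$ is exactly minimizing $\|\w-\w^{(S)}\|_2^2$; hence $S\in I(\w)$ iff $\w^{(S)}$ solves the hard-thresholding problem. For the converse inclusion I would note that any minimizer $\w'$ of $\min_{\|\w'\|_0\le k}\|\w'-\w\|_2^2$ must satisfy $w'_i=w_i$ on its support $T=\operatorname{supp}(\w')$ (otherwise replacing $w'_i$ by $w_i$ strictly decreases the objective), so $\w'=\w^{(T)}$, and $T$ can be completed to a set $S$ of size exactly $k$ without changing $\w^{(T)}$ by adjoining zero coordinates, giving $\w'=\w^{(S)}$ with $S\in I(\w)$. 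Combining this equality with the display above yields $\partial n(\w)=\text{conv}(\pi_{HT}(\w))$.

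The main obstacle is this last identification, and in particular the bookkeeping around ties and the degenerate case $\|\w\|_0<k$: when the $k$-th and $(k+1)$-th magnitudes coincide, several supports $S$ are simultaneously active and must all appear; and when $\w$ has fewer than $k$ nonzeros the optimal supports have size strictly below $k$, so one must argue that they can always be padded with zero coordinates up to size exactly $k$ while leaving $\w^{(S)}$ (and hence the gradient) unchanged. Once both inclusions are handled carefully, the convex hull produced by Theorem~\ref{thm:baptiste} matches $\text{conv}(\pi_{HT}(\w))$ exactly.
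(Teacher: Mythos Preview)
Your proposal is correct and follows essentially the same approach as the paper: write $n$ as the finite maximum over size-$k$ supports $S$ of the smooth quadratics $f_S(\w)=\tfrac12\|\w^{(S)}\|_2^2$, then apply Theorem~\ref{thm:baptiste}. Your treatment of the identification $\{\w^{(S)}:S\in I(\w)\}=\pi_{HT}(\w)$ via the complementary-norm identity, and your explicit handling of ties and the degenerate case $\|\w\|_0<k$, are in fact more careful than the paper's own proof, which simply asserts this equality.
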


  \begin{proof}
Let us denote each possible supports of $k$ coordinates from $[_{k}^{d}]$ by $\mathcal{I}_{i}$ for $i = 1, ..., \binom{d}{k}$ 
    The top-$k$ norm can be written as follows:
    $$n(\w) = \max_{i} n_i(x) = \max \{n_{1}(\w), ..., n_{\binom{d}{k}}(\w)\} $$ where each $n_{i} = \frac{1}{2}\|\w_{\mathcal{I}_i}\|_2^2$, with $\w_{\mathcal{I}_i}$  the thresholding of $\w$ with all coordinates not in $\mathcal{I}_i$ set to 0.
    Let us denote, for a given $\w \in \R^d$, $\Pi(\w) \subset [_{k}^{d}]$ to be the set of  \textit{supports} such that for any $j \in \Pi(\w)$: $n_{j}(\w) = n(\w)$. In other words, $\Pi(\w)$ denotes the \textit{active index set} described in Theorem \ref{thm:baptiste}. Those supports are those which select the top-$k$ components of $\w$ in absolute value (several choices are possible). In other words:
$$\pi_{HT}(\w) = \{ \w_{\mathcal{I}_j}: j \in \Pi(\w)\}$$
    Now, we know that for all $i \in \binom{d}{k}$, $n_i$ is differentiable, since $n_i$ is simply the half squared $\ell_2$ norm of the thresholding of $\w$ on a fixed support $\mathcal{I}_i$. Since it is differentiable, its subgradient is thus a singleton composed of its gradient: $\partial  n_{i} (\w) = \{\nabla n_{i} (\w) \}= \{\w_{\mathcal{I}_i}\}$. 
    
    Therefore, from Theorem \ref{thm:baptiste}, we have:
    
$$ \partial f(\w) = \text{conv} (\nabla f_i(\w): i \in \Pi(\w)) =  \text{conv} ( \w_{\mathcal{I}_j}: j \in \Pi(x))=  \text{conv} (\pi_{HT}(\w))$$

  \end{proof}

\begin{appproposition}[Proposition 11.3, \cite{Rockafellar70}]\label{prop:reci}
  For any proper, lsc, convex function $f$, denote by $f^*$ its Fenchel dual defined above in \ref{def:dual}. One has $\partial f^* = (\partial f)^{-1}$ and $\partial f = (\partial f^*)^{-1}$.
\end{appproposition}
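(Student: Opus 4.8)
The plan is to reduce both asserted identities to a single, manifestly symmetric characterization of the subdifferential through the Fenchel--Young equality, and then to use the biconjugate theorem to identify the two sides. First I would record the Fenchel--Young inequality, which is immediate from Definition~\ref{def:dual}: for all $\x,\y$ one has $f(\x)+f^*(\y)\ge\langle\x,\y\rangle$. The crucial refinement is that $\y\in\partial f(\x)$ is \emph{equivalent} to equality here. Indeed, unfolding the definition of the subgradient, $\y\in\partial f(\x)$ means $f(\z)\ge f(\x)+\langle\y,\z-\x\rangle$ for all $\z$; rearranging gives $\langle\y,\x\rangle-f(\x)\ge\langle\y,\z\rangle-f(\z)$ for every $\z$, i.e. $\langle\x,\y\rangle-f(\x)=\sup_{\z}\{\langle\y,\z\rangle-f(\z)\}=f^*(\y)$. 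Hence
$$ \y\in\partial f(\x)\iff f(\x)+f^*(\y)=\langle\x,\y\rangle. $$

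Next I would apply the identical computation to $f^*$ (which is again proper, lsc, and convex, so its conjugate $f^{**}$ is well defined), obtaining
$$ \x\in\partial f^*(\y)\iff f^*(\y)+f^{**}(\x)=\langle\x,\y\rangle. $$
The key step is then to eliminate the biconjugate: because $f$ is proper, lsc, and convex, the Fenchel--Moreau theorem yields $f^{**}=f$, so the right-hand equality becomes $f(\x)+f^*(\y)=\langle\x,\y\rangle$, which is exactly the condition characterizing $\y\in\partial f(\x)$. Combining the two displays gives $\y\in\partial f(\x)\iff\x\in\partial f^*(\y)$, i.e. the graphs of $\partial f^*$ and of the inverse relation of $\partial f$ coincide; reading $(\partial f)^{-1}(\y)=\{\x:\y\in\partial f(\x)\}$, this is precisely $\partial f^*=(\partial f)^{-1}$. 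The second identity $\partial f=(\partial f^*)^{-1}$ then follows formally by inverting the set-valued relation just established, with no additional argument.

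I expect the only substantive obstacle to be the appeal to the Fenchel--Moreau identity $f^{**}=f$, since that is the sole place where the hypotheses are genuinely used: without lower semicontinuity the conjugation only recovers the lsc convex hull of $f$ rather than $f$ itself, and then the equivalence for $\partial f^*$ would no longer line up with that for $\partial f$. Everything else is elementary rearrangement of the Fenchel--Young (in)equality, so I would treat $f^{**}=f$ as the cited classical fact and keep the rest of the write-up brief.
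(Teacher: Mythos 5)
Your proof is correct: characterizing $\y\in\partial f(\x)$ by equality in the Fenchel--Young inequality, applying the same characterization to $f^*$, and gluing the two via the Fenchel--Moreau identity $f^{**}=f$ is exactly the classical argument. The paper itself gives no proof---it simply defers to \cite{Rockafellar70}---and your write-up is the standard one found there, with the hypotheses (proper, lsc, convex) used precisely where you identify them, namely to guarantee $f^{**}=f$ (they also ensure $f^*$ is proper, so the equality characterization is never vacuous).
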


\begin{proof}Proof in \cite{Rockafellar70}.
\end{proof}

\begin{applemma}[Fenchel conjugate of a half squared norm \cite{boyd2004convex} Example 3.27, p. 94]\label{lem:fench}
  Consider the function $f(\w) = \frac{1}{2}\|\w\|^2$, where $\|\cdot\|$ is a norm, with dual norm $\| \cdot \|_{*}$. Its Fenchel conjugate is $f^*(\w) = \frac{1}{2}\|\w\|_{*}^2$.
\end{applemma}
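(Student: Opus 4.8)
The plan is to compute the supremum in the definition of the Fenchel conjugate directly, reducing the $d$-dimensional optimization to a one-dimensional one by separating the \emph{direction} and the \emph{magnitude} of the maximizing vector. Writing the conjugate at an arbitrary point $\y \in \R^d$ (I rename the argument of $f^*$ to $\y$ to match Definition~\ref{def:dual}, reserving $\w$ for the optimization variable), we have $f^*(\y) = \sup_{\w} \big\{ \langle \y, \w\rangle - \tfrac{1}{2}\|\w\|^2 \big\}$.

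First I would establish the upper bound $f^*(\y) \le \tfrac{1}{2}\|\y\|_*^2$. The key inequality is the defining property of the dual norm, $\langle \y, \w\rangle \le \|\y\|_* \|\w\|$, valid for every $\w$. Substituting this into the objective gives $\langle \y, \w\rangle - \tfrac{1}{2}\|\w\|^2 \le \|\y\|_* \|\w\| - \tfrac{1}{2}\|\w\|^2$, and the right-hand side depends on $\w$ only through $t := \|\w\| \ge 0$. Maximizing the scalar quadratic $g(t) = \|\y\|_* t - \tfrac{1}{2} t^2$ over $t \ge 0$ yields its maximum at $t = \|\y\|_*$, with value $\tfrac{1}{2}\|\y\|_*^2$; taking the supremum over $\w$ then gives the claimed upper bound.

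Next I would prove the matching lower bound by exhibiting a (near-)optimal $\w$. Since in finite dimensions the unit sphere $\{\w : \|\w\| = 1\}$ is compact and $\w \mapsto \langle \y, \w\rangle$ is continuous, the dual-norm supremum is attained: there is a $\w_0$ with $\|\w_0\| = 1$ and $\langle \y, \w_0\rangle = \|\y\|_*$. Restricting the supremum in the definition of $f^*$ to the ray $\{ t\w_0 : t \ge 0\}$ and using positive homogeneity of the norm, the objective becomes exactly $g(t) = t\|\y\|_* - \tfrac{1}{2} t^2$, whose maximum $\tfrac{1}{2}\|\y\|_*^2$ is therefore a lower bound for $f^*(\y)$. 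Combining the two bounds gives $f^*(\y) = \tfrac{1}{2}\|\y\|_*^2$, which is the statement (after renaming the argument back to $\w$).

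The main obstacle, and really the only subtle point, is the attainment of the dual-norm supremum used in the lower bound. If one prefers to avoid the compactness argument, I would instead take, for each $\varepsilon > 0$, a vector $\w_\varepsilon$ with $\|\w_\varepsilon\| = 1$ and $\langle \y, \w_\varepsilon\rangle \ge \|\y\|_* - \varepsilon$, run the same ray maximization, and let $\varepsilon \to 0$; this establishes $f^*(\y) \ge \tfrac{1}{2}\|\y\|_*^2$ in any normed space. The degenerate case $\y = 0$ is immediate, since then $\|\y\|_* = 0$ and the supremum is $0$, attained at $\w = 0$.
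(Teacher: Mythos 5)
Your proof is correct and is essentially the same argument as the one the paper defers to: the paper proves this lemma only by citation to Example 3.27 of \cite{boyd2004convex}, which likewise bounds $\langle \y, \w\rangle \le \|\y\|_*\|\w\|$, reduces to the scalar maximization of $t\|\y\|_* - \tfrac{1}{2}t^2$, and matches the bound along a dual-norm-achieving direction. Your $\varepsilon$-approximate variant for the lower bound is a nice touch that removes the finite-dimensional compactness assumption, but otherwise there is nothing to add.
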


\begin{proof}
  Proof in \cite{boyd2004convex}.
\end{proof}

\begin{applemma}[Dual of the $k$-support norm, \cite{argyriou2012sparse}, 2.1]\label{lem:topkdual}
  Denote by $(\| \cdot \|)_{*}$ the dual norm of a norm $\| \cdot \|$. The top-$k$ norm (see Definition \ref{def:topk}) is the dual norm of the $k$-support :
  $$ (\| \cdot \|_{k}^{sp})_{*} =  \|\cdot \|_{(k)}$$
\end{applemma}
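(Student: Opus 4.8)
The plan is to compute the dual norm directly from its definition, $(\|\cdot\|_k^{sp})_*(\z) = \sup_{\|\w\|_k^{sp} \leq 1} \langle \z, \w\rangle$, and to match it with the top-$k$ norm by establishing the two inequalities $(\|\cdot\|_k^{sp})_*(\z) \leq \|\z\|_{(k)}$ and $(\|\cdot\|_k^{sp})_*(\z) \geq \|\z\|_{(k)}$. Throughout I write $\z_I$ for the vector obtained by zeroing the entries of $\z$ outside a support $I$.

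For the upper bound, I would take any $\w$ with $\|\w\|_k^{sp} \leq 1$ and invoke the variational formula from the Definition of the $k$-support norm, whose minimum is attained, to write $\w = \sum_{I \in \mathcal{G}_k} \bm{v}_I$ with $\operatorname{supp}(\bm{v}_I) \subseteq I$ and $\sum_{I} \|\bm{v}_I\|_2 \leq 1$. Since each $\bm{v}_I$ is supported on $I$, Cauchy--Schwarz gives $\langle \z, \bm{v}_I\rangle = \langle \z_I, \bm{v}_I\rangle \leq \|\z_I\|_2 \, \|\bm{v}_I\|_2$. Summing over $I$ and factoring out the largest coefficient yields $\langle \z, \w\rangle \leq \bigl(\max_{I \in \mathcal{G}_k}\|\z_I\|_2\bigr)\sum_{I} \|\bm{v}_I\|_2 \leq \max_{|I| \leq k}\|\z_I\|_2 = \|\z\|_{(k)}$, where the final equality holds because the $\ell_2$ norm of a $k$-restriction is maximized by selecting the $k$ entries of $\z$ largest in magnitude, which is exactly the top-$k$ norm of Definition \ref{def:topk}. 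Taking the supremum over feasible $\w$ gives the first inequality.

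For the reverse inequality, I would exhibit a feasible $\w$ that attains the bound. Assuming $\z \neq \bm{0}$, let $I^*$ be a support of size $k$ selecting the top-$k$ entries of $\z$ in magnitude and set $\w = \z_{I^*}/\|\z_{I^*}\|_2$. Because $\w$ is supported on a set of cardinality at most $k$, the single-atom decomposition $\bm{v}_{I^*} = \w$ shows $\|\w\|_k^{sp} \leq \|\w\|_2 = 1$, so $\w$ is feasible. Then $\langle \z, \w\rangle = \langle \z_{I^*}, \z_{I^*}\rangle/\|\z_{I^*}\|_2 = \|\z_{I^*}\|_2 = \|\z\|_{(k)}$, which establishes $(\|\cdot\|_k^{sp})_*(\z) \geq \|\z\|_{(k)}$. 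Combining the two inequalities, and handling $\z=\bm{0}$ trivially, yields the claimed identity.

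The main content is mild bookkeeping: justifying the factorization $\sum_{I} \|\z_I\|_2\|\bm{v}_I\|_2 \leq (\max_{I} \|\z_I\|_2)\sum_{I} \|\bm{v}_I\|_2$ and confirming that $\max_{|I|\leq k}\|\z_I\|_2$ coincides with the top-$k$ norm. A cleaner and essentially equivalent route is to observe that the $k$-support norm is the atomic norm whose atoms are the unit-$\ell_2$ vectors of support at most $k$; its dual is then the support function of that atom set, $\sup_{\|\bm{a}\|_2\le 1,\,\|\bm{a}\|_0\le k}\langle \z,\bm{a}\rangle$, and the inner maximization over $\bm{a}$ for a fixed admissible support $I$ reduces by Cauchy--Schwarz to $\|\z_I\|_2$, recovering the same two-sided argument in one line.
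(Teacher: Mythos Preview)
Your argument is correct. The paper does not actually prove this lemma: it merely cites the result from \cite{argyriou2012sparse}, Proposition~2.1, without reproducing any argument, so there is no ``paper's own proof'' to compare against. Your two-inequality approach via the variational (infimal-convolution) definition and Cauchy--Schwarz is the standard one, and your alternative atomic-norm viewpoint at the end is precisely how \cite{argyriou2012sparse} frames it (the $k$-support unit ball is the convex hull of $k$-sparse $\ell_2$-unit vectors, so its dual norm is the support function of that set). One tiny remark: you invoke attainment of the minimum in the variational formula; this holds in finite dimensions, but even without it an $\varepsilon$-approximate decomposition would suffice for the upper bound.
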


\begin{appcorollary}\label{cor:dualsq}
$$  (\frac{1}{2} {\| \cdot \|_{k}^{sp}}^2)^{*} =  \frac{1}{2} \| \cdot \|_{(k)}^2$$
\end{appcorollary}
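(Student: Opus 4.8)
The plan is to derive this conjugacy identity directly by composing the two lemmas that immediately precede it, since the corollary is an instance of the general fact that Fenchel conjugation of a half-squared norm dualizes the norm. First I would invoke Lemma~\ref{lem:fench}, which states that for any norm $\|\cdot\|$ with dual norm $\|\cdot\|_*$, the Fenchel conjugate of the half-squared norm $f(\w) = \frac{1}{2}\|\w\|^2$ is $f^*(\w) = \frac{1}{2}\|\w\|_*^2$. The only prerequisite here is that $\|\cdot\|_k^{sp}$ is genuinely a norm, which is already established (it is the $k$-support \emph{norm}, an atomic norm as noted in the discussion of Definition~1). Applying Lemma~\ref{lem:fench} with $\|\cdot\| = \|\cdot\|_k^{sp}$ therefore yields
$$ \Big(\tfrac{1}{2}{\|\cdot\|_k^{sp}}^2\Big)^* = \tfrac{1}{2}\big((\|\cdot\|_k^{sp})_*\big)^2. $$

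The second step is to identify the dual norm $(\|\cdot\|_k^{sp})_*$ appearing on the right-hand side. This is supplied verbatim by Lemma~\ref{lem:topkdual}, which asserts $(\|\cdot\|_k^{sp})_* = \|\cdot\|_{(k)}$, i.e. that the top-$k$ norm is dual to the $k$-support norm. Substituting this equality into the expression above gives
$$ \Big(\tfrac{1}{2}{\|\cdot\|_k^{sp}}^2\Big)^* = \tfrac{1}{2}\|\cdot\|_{(k)}^2, $$
which is exactly the claimed identity, completing the argument.

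There is no genuine obstacle in this derivation: the corollary is a routine chaining of Lemma~\ref{lem:fench} and Lemma~\ref{lem:topkdual}, with all the real work having been done in establishing those two lemmas (in particular, the norm-duality computation in Lemma~\ref{lem:topkdual} and the standard half-squared conjugacy in Lemma~\ref{lem:fench}). The one point worth double-checking is simply that the hypotheses of Lemma~\ref{lem:fench} are met — namely that $\|\cdot\|_k^{sp}$ is a bona fide norm so that the dual-norm formula applies — which holds by construction. I would keep the proof to these two sentences of substitution rather than re-deriving either lemma.
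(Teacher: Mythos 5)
Your proof is correct and matches the paper's own argument exactly: the paper proves Corollary~\ref{cor:dualsq} by the same two-step chaining of Lemma~\ref{lem:fench} (conjugate of a half-squared norm is the half-squared dual norm) with Lemma~\ref{lem:topkdual} (the top-$k$ norm is dual to the $k$-support norm). Your additional remark that one should verify $\|\cdot\|_k^{sp}$ is a bona fide norm is a sound sanity check, though the paper leaves it implicit.
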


\begin{proof}
Corollary  \ref{cor:dualsq} follows from Lemmas \ref{lem:fench} and \ref{lem:topkdual}.
\end{proof}

\section{Proximal operator of the $k$-support norm}\label{sec:proxksn}

In this section, we describe the method that we use to compute the proximal operator of the half-squared $k$-support norm, as is described in Algorithm 1 from \cite{mcdonald2016new}. In our code (available at \url{https://github.com/wdevazelhes/IRKSN_AAAI2024}), we use an existing implementation from the \texttt{modopt} package \cite{farrens2020pysap}. Note that Algorithm 1 from  \cite{mcdonald2016new} was originally described in a more general formulation, from which the algorithm described below can be obtained by fixing $a=0$, $b=1$, and $c=k$ (we refer the reader to \cite{mcdonald2016new} for more details on what variables $a$, $b$, and $c$ refer to).

\begin{algorithm}[ht]
\caption{(\cite{mcdonald2016new}, Algorithm 1) Computation of $\bm{x}= \text{prox}_{\frac{\lambda}{2}\|\cdot \|_{(k)}^2}\left(\w \right)$, with:  $\forall \alpha \in \mathbb{R}_+: S(\alpha) := \sum_{i=1}^d \min(1, \max(0, \alpha |w_i| - \lambda))$. }
\label{alg:prox_01}
\begin{algorithmic} 
\REQUIRE parameter: $\lambda$.
\STATE \textbf{1.} Sort points $\left\{ \alpha^i \right\}_{i=1}^{2d} = \left\{ \frac{\lambda}{ \vert w_j \vert}, \frac{1+\lambda}{ \vert w_j \vert} \right\}_{j=1}^d$  such that $\alpha^i \leq \alpha^{i+1}$.
\STATE \textbf{2.} Identify points $\alpha^i$ and $\alpha^{i+1}$ such that $S(\alpha^i) \leq k$ and $S(\alpha^{i+1})\geq k$ by binary search.
\STATE \textbf{3.} Find $\alpha^*$ between $\alpha^i$ and $\alpha^{i+1}$ such that $S(\alpha^*)=k$ by linear interpolation.
\STATE \textbf{4.} Compute $\theta_i(\alpha^*):= \min(1, \max(0, \alpha^* |w_i| - \lambda))$ for $i=1\ldots, d$.
\STATE \textbf{5.} Return $x_i =\frac{\theta_i w_i}{\theta_i+\lambda}$ for $i=1\ldots, d$.
\end{algorithmic}
\end{algorithm}

\section{Details on the synthetic experiment from Section \textit{Experiments}}

In this section, we present the hyperparameters for the experiments in Section \textit{Experiments}, for each algorithm. First, we fix $k=10$ for all algorithms that require setting a parameter $k$. We run the algorithms for a maximum number of iterations of 20,000. Note that in this synthetic experiment we do not fit the intercept or center the data since the data has $0$ mean. For IHT, we search $\eta$ in $\{0.0001, 0.001, 0.01, 0.1, 1.\}$. For Lasso, we use the implementation \texttt{lasso\_path} from \texttt{scikit-learn} \cite{pedregosa2011scikit}, with its default parameters, which automatically choses the path of  $\lambda$ based on a data criterion. For ElasticNet, we use the implementation \texttt{enet\_path} from \texttt{scikit-learn} \cite{pedregosa2011scikit}, which similarly as above, automatically chooses the path of  $\lambda$ based on a data criterion. In addition, we choose the recommended values $\{.1, .5, .7, .9, .95, .99, 1\}$ of \texttt{ElasticNetCV} for the relative weight of the $\ell_1$ penalty. For the KSN algorithm (i.e. linear regression penalized with $k$-support norm), we choose the strenght of the $k$-support norm penalty $\lambda$ in $\{0.1, 1.\}$, and we set $L$ (which is the inverse of the learning rate) to $1e6$ similarly as in Section \ref{app:details}. For OMP, we use the implementation from \texttt{scikit-learn} \cite{pedregosa2011scikit}. For SRDI, we search for the parameters $\kappa$ and $\alpha$ from \cite{osher2016}, respectively in the intervals $\{0.0001, 0.001, 0.01, 0.1, 1.\}$ and $\{0.0001, 0.001, 0.01, 0.1, 1.\}$. For IROSR,
we search for the parameters $\eta$ and $\alpha$ respectively in $\{0.0001, 0.001, 0.01, 0.1, 1.\}$ and $\{0.0001, 0.001, 0.01, 0.1, 1.\}$. For IRCR, we set $\tau$ and $\sigma$ to $\frac{0.9}{\sqrt{2 \|\bm{X}\|^2}}$ (in order to verify the condition of equation (6) in \cite{molinari2021iterative}) similarly as in section \ref{app:details}. For IRKSN (ours), we search $\alpha$ (from Algorithm \ref{alg:irksn}) in $\{0.0001, 0.001, 0.01, 0.1, 1, 10\}$. Our results are produced on a server of CPUs with 16 cores the experiment takes a few hours to run.

\section{Path of IRKSN vs Lasso vs ElasticNet}
\label{app:path}

In this section, we plot in Figure \ref{fig:pathapp} the path of ElasticNet (with an $\ell_1$ ratio of $0.8$, i.e. its penalty is $\lambda (0.8 \|\cdot\|_{1} + 0.2 \| \cdot \|_2^2)$), in addition to the plot of the Lasso path and the  IRKSN path, from Section \textit{Illustrating Example}. As we can see, the ElasticNet, as the Lasso, cannot recover the true sparse vector.

      \begin{figure}[!bht]
  \centering
  \subfigure[Lasso path]{\includegraphics[scale=0.4]{./figs/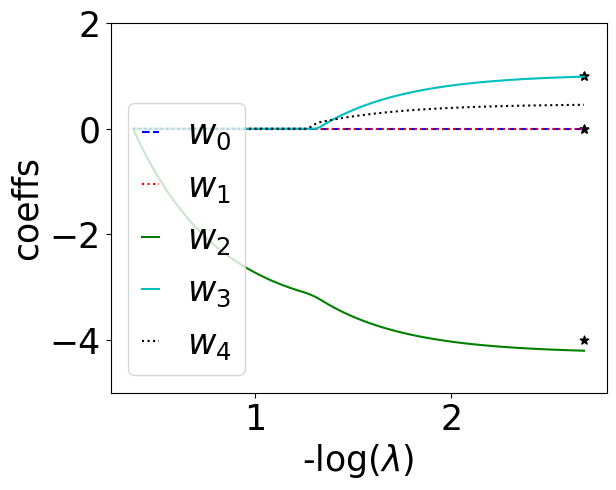}\label{fig:lassopathapp}} 
  \subfigure[IRKSN path]{\includegraphics[scale=0.4]{./figs/irksn_path.png}\label{fig:irksnpathapp}}
  \subfigure[ElasticNet path]{\includegraphics[scale=0.4]{./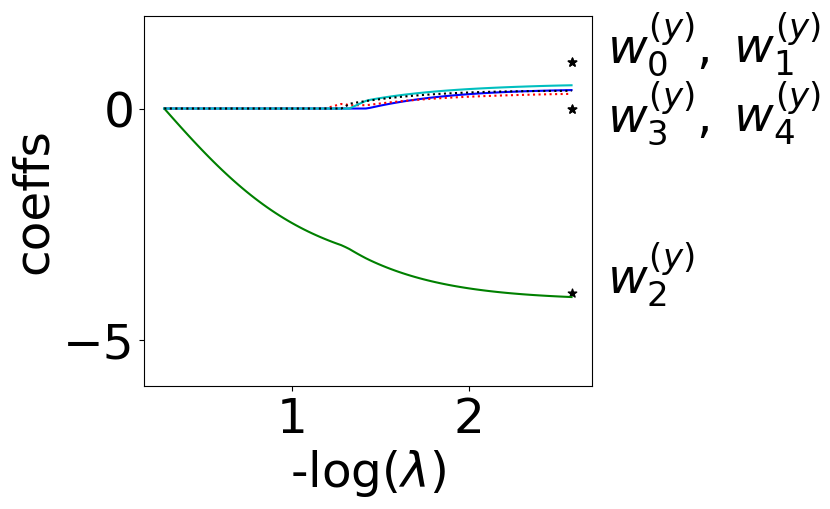}\label{fig:enetpathapp}}
   \caption{Comparison of the path of IRKSN with Lasso and Elasticnet}
   \label{fig:pathapp}
 \end{figure}

\section{Extra experiment: fMRI decoding}

\subsection{Setting}

\paragraph{Data-set construction} We consider a functional MRI (fMRI) decoding experiment, where observations $\X$ are activity recordings (3D activity voxel maps) of fMRI for several subjects which are presented with images of two different classes, i.e. where the observed target $\y^{\delta}$ comprises labels from the set $\{class1, class2\}$ converted to -1 and 1 respectively. It was shown experimentally in \cite{belilovsky15b} that $k$-support norm regularization (as a penalty) performs significantly better than Lasso on such kind of fMRI tasks: we therefore wish to evaluate whether this is true also for iterative regularization with $k$-support norm. We use the Haxby dataset \cite{haxby2001distributed}, downloaded with the use of the \texttt{nilearn} package \cite{Chamma_nilearn}. 
We then prepare the data from raw recordings following closely the protocol from the fMRI example from the package \texttt{hidimstat}\footnote{\url{https://ja-che.github.io/hidimstat/auto_examples/plot_fmri_data_example.html}}, choosing the neural recordings of a specific subject (subject number 2), as they do. Once we obtain the data matrix $\X$ and target $\y^{\delta}$, we use the algorithms from Table \ref{tab:pen} to estimate the true model $\w^*$, which is an estimate of the brain functional region associated with the true (noiseless) response variable $\y$. Such dataset contains 216 samples, of dimensionality 39912. We split the dataset into a training set and a validation set, with the ratio 80\%-20\%: since we consider only the support reconstruction task, we indeed do not use any test-set in this case.

\paragraph{Hyperparameters and algorithms tuning}
Below we give more details on the tuning of each algorithm. Once the dataset is prepared, we fine tune the algorithms hyperparameters on mean squared error prediction on the validation set. The intercept of the models is fitted separately, using the same method as in section \ref{app:details}. Additionally, we preprocess first the data by removing features of variance $0$ and centering and standardizing $\X$, as described in section \ref{app:details}.  Additionally, since such dataset of neural images is high dimensional, to reduce the computational cost we use sensible values for hyperparameters whenever those are possible: for instance, for algorithms that have convergence guarantees if the learning rate is equal to the inverse of the Lipschitz-smoothness constant (which in our case is the squared nuclear norm of $\X$ (denoted $\| \X \|^2 $)), we set the learning rate denoted by $\eta$ to such value. Also, for all algorithms which require setting a sparsity level $k$ (IRKSN, IHT, KSN, OMP), we set $k=150$, which is an estimate that we considered known \textit{a priori} for the size of the function region we wish to reconstruct. Additionally, we run all algorithms with a maximum number of iterations of 10,000. For IHT, we set $\eta=\frac{1}{\|\X \|^2}$. For Lasso, we use the implementation \texttt{lasso\_path} from \texttt{scikit-learn} \cite{pedregosa2011scikit}, with its default parameters, which automatically choses the path of  $\lambda$ based on a data criterion. For ElasticNet, we use the implementation \texttt{enet\_path} from \texttt{scikit-learn} \cite{pedregosa2011scikit}, which similarly as above, automatically chooses the path of  $\lambda$ based on a data criterion. In addition, we choose the recommended values $\{.1, .5, .7, .9, .95, .99, 1\}$ of \texttt{ElasticNetCV} for the relative weight of the $\ell_1$ penalty. For KSN penalty, we choose the strenght of the $k$-support norm penalty $\lambda$ in $\{0.1, 1.\}$, and set $\eta = \frac{1}{\| \X\|^2}$. For SRDI, we search for the parameters $\kappa$ and $\alpha$ from \cite{osher2016}, respectively in $\{0.001, 0.01, 0.1\}$ and $\{0.001, 0.01, 0.1\}$. For IROSR, we search for the parameters $\eta$ and $\alpha$ respectively in $\{0.001, 0.01, 0.1\}$ and $\{0.001, 0.01, 0.1\}$. For IRCR, we set $\tau$ and $\sigma$ to $\frac{0.9}{\sqrt{2 \|\bm{X}\|^2}}$ (in order to verify the condition of equation (6) in \cite{molinari2021iterative}). For IRKSN (ours), we set $\alpha$ (from Algorithm \ref{alg:irksn}) to $0.001$, since as recommended by Theorem \ref{thm:thm}, a smaller value of $\alpha$ has more chance to verify the conditions for convergence of \ref{thm:thm}, (assuming $\X$ verifies Assumptions \ref{ass:sol} and \ref{ass:ass}). Additionally, a smaller $\alpha$ ensures that the sparsity of the iterates is closer to $k$ sparse. Our results are produced on a server of CPUs with 16 cores the experiment takes a few hours to run.

\paragraph{Post processing:} Once the estimated model $\w$ is returned by each method, we post process $\w$ as in \cite{chevalier2021decoding} Section 3.2: we first compute the corresponding corrected $p$-values obtained when assumed that the weights values are sampled from a Gaussian distribution (see \cite{chevalier2021decoding} for more details). Then, we transform those as $z$-value maps instead of $p$ values maps, and set the FWER (Family-Wise Error Rate) threshold for detection to 0.1  as is done in \cite{chevalier2021decoding}, which is translated into a corresponding threshold for $z$-values using the Bonferroni correction. We refer the reader to \cite{chevalier2021decoding} for more details on such post-processing and the related terminology. We then plot in Figure \ref{fig:fmri} the estimated functional region for all of the methods.

\subsection{Results}

\paragraph{Visual comparison on reconstruction}
We plot the fMRI reconstruction results of each method on Figure \ref{fig:fmri}, in the case where $class1$ correspond to the class 'face' and $class2$ corresponds to the class 'house'. As a comparison, we also have plotted in Figure \ref{fig:encludl} the result of the EnCluDL algorithm from \cite{chevalier2021decoding} in the same setting, and which may be considered as a ground truth: such method indeed uses knowledge of the spatial structure of the voxel grid (i.e., which voxel is close to each voxel, therefore more likely to be correlated with it), contrary to the methods considered in our paper which are blind to such structure. As we can see, methods based on an implicit or explicit $\ell_1$ norm regularization perform poorly, since they tend to estimate a support that is too small: indeed, methods such as the Lasso are known to fail to select group of correlated column, and tend to select only a few explicative features \cite{zou2005regularization}. On the contrary, $k$-support norm regularization like IRKSN is able to estimate a support of larger size, which by inspection seems to be a better estimate of the ground truth. Additionally, we can observe that even ElasticNet, which supposedly should also be able to perform reasonably well in presence of correlated features \cite{zou2005regularization}, does not seem to recover the true functional region: indeed, although its $\ell_1$ and $\ell_2$ penalties are tuned by grid-search on a validation set, it is more difficult for such method to fix a specific sparsity $k$. On the other hand, methods which fix a specific $k$ (IHT, KSN, IRKSN, OMP) are advantaged. We can also notice that the solutions of IHT and IRKSN are almost the same, and appear to be the most successful reconstruction of the active functional brain region.
\paragraph{Quantitative results} 
Finally, we  also provide extra quantitative results in Table \ref{tab:quant} below for the face/house and house/shoe data splits, in terms of $\|\bm{w} - \bm{w}^* \|$, where for the ground truth $\w^*$ we take the weight vector obtained by running the EnCluDL method. Note that IHT also has a good performance, but, unlike IRKSN, the theory of sparse recovery for IHT fails to explain such success since (see, e.g. \cite{Jain14}) the RIP property is typically not verified for correlated data (like fMRI \cite{belilovsky15b}):
\begin{table}[H]
\centering
\begin{tabular}{cccccccccc}
\toprule
 & Lasso & ElasticNet & OMP & IHT & KSN & IRKSN & IRCR & IROSR & SRDI \\
\midrule
face'/'house' & .425 & .349 & .938 & .2441 & .247 & \textbf{.2440} & .341 & .381 & .314 \\
'house'/'shoe' & .528 & .500 & .938 & .2968 & .299 & \textbf{.2965} & .407 & .502 & .357 \\
\bottomrule
\end{tabular}
\caption{Comparison of the algorithms on model estimation $\| \bm{w} - \bm{w}^*\|$ ($\w^*$: weight vector obtained by running the EnCluDL method).}
\label{tab:quant}
\end{table}

      \begin{figure}[!htp]
  \centering
  \subfigure[Lasso]{\includegraphics[scale=0.4]{./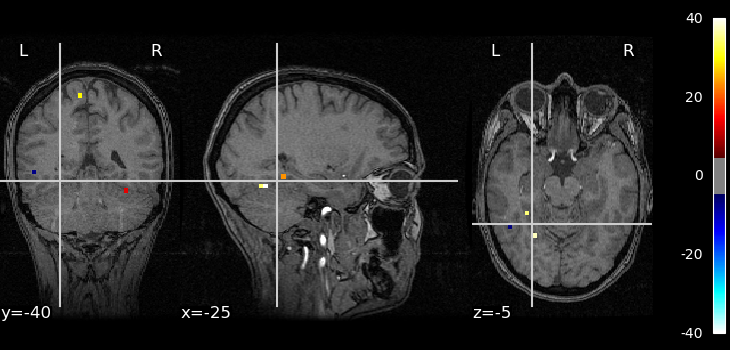}\label{fig:fmrilasso}} 
  \subfigure[ElasticNet]{\includegraphics[scale=0.4]{./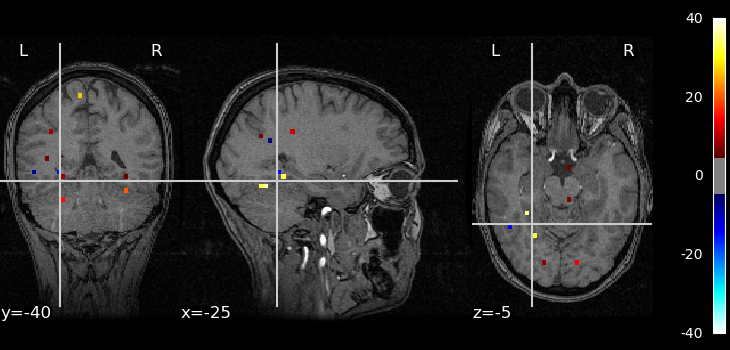}\label{fig:fmrienet}}
    \subfigure[OMP]{\includegraphics[scale=0.4]{./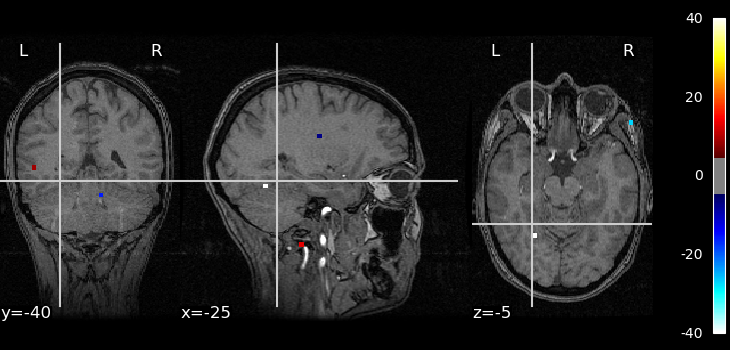}\label{fig:fmriomp}}
    \subfigure[SRDI]{\includegraphics[scale=0.4]{./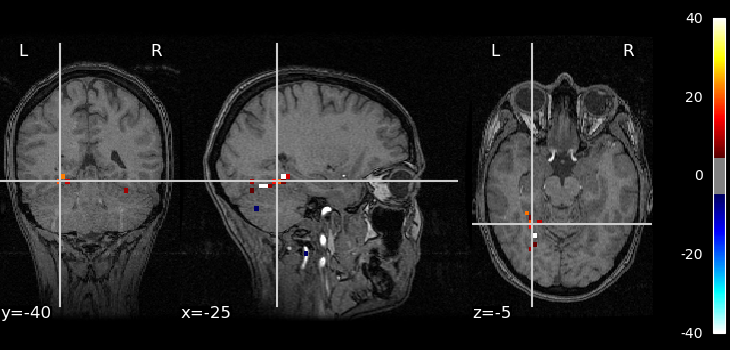}\label{fig:fmrisrdi}}
        \subfigure[IROSR]{\includegraphics[scale=0.4]{./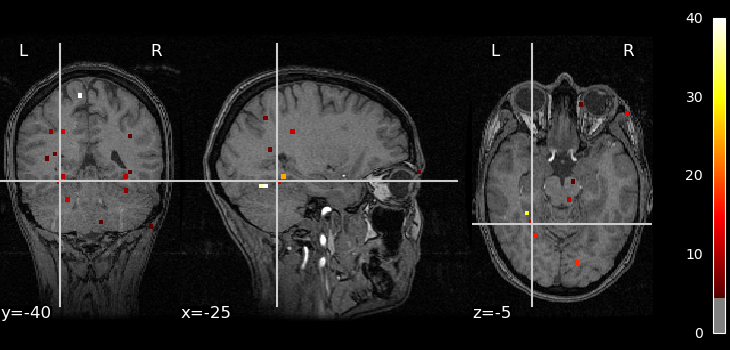}\label{fig:fmriirosr}}
    \subfigure[IHT]{\includegraphics[scale=0.4]{./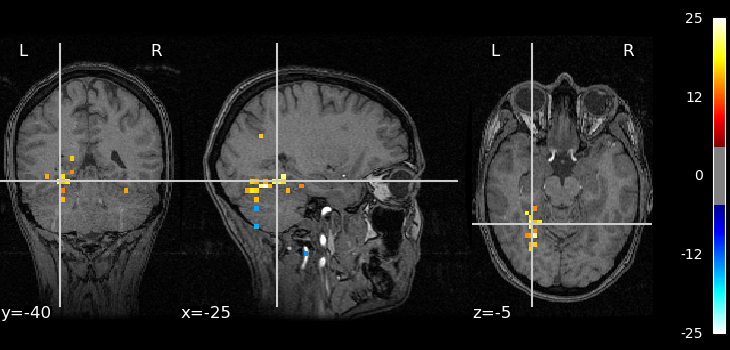}\label{fig:fmriiht}}
        \subfigure[IRCR]{\includegraphics[scale=0.4]{./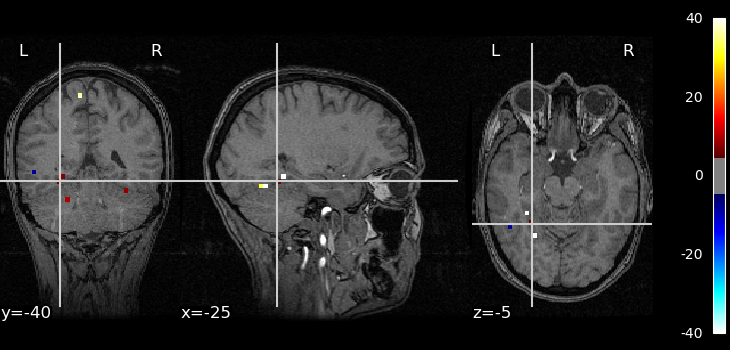}\label{fig:fmriircr}}
            \subfigure[KSN]{\includegraphics[scale=0.4]{./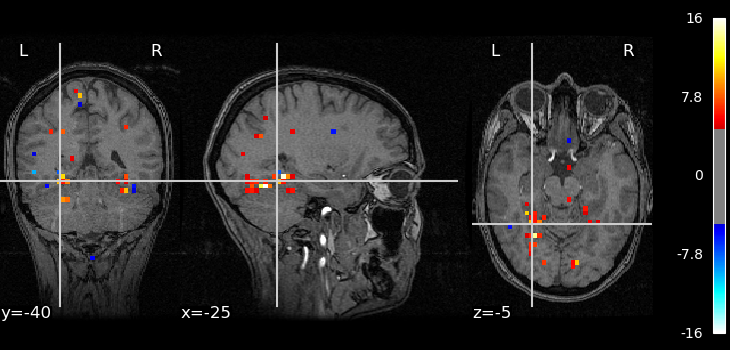}\label{fig:fmriksn}}
              \subfigure[IRKSN]{\includegraphics[scale=0.4]{./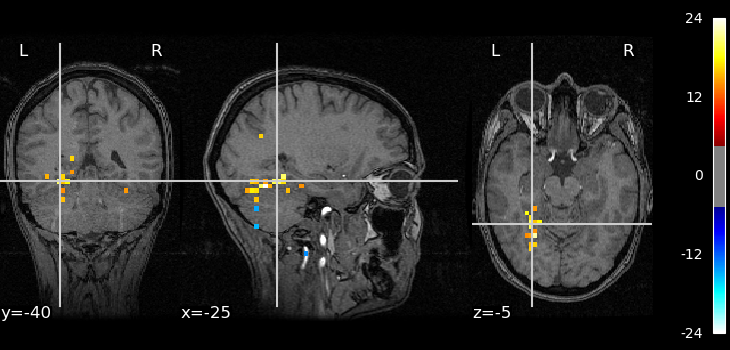}\label{fig:fmriirksn}}
                            \subfigure[EnCluDL]{\includegraphics[scale=0.4]{./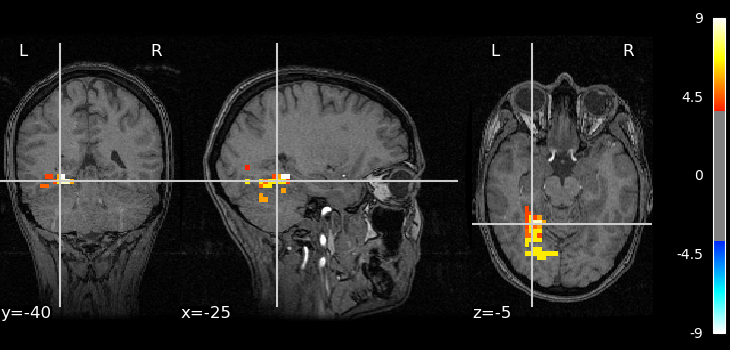}\label{fig:encludl}}
   \caption{Comparison of different methods on an fMRI decoding task. Figure \ref{fig:encludl} is the EnCluDL method from \cite{chevalier2021decoding} which uses the additional knowledge of the spatial structure of the voxels, and may be considered as some ground truth for the functional region to be reconstructed.}
   \label{fig:fmri}
 \end{figure}

\subsection{Interpretation}

Such an fMRI dataset is a real-life (non-synthetic) dataset, therefore its true data generating process is unknown. However, we provide here some attempt to explain the success of $k$-support norm regularization on such fMRI reconstruction task, in the light of our newly derived sufficient conditions for recovery derived in the paper. More precisely, we present below a data generating process that we believe might potentially be similar to the true underlying data generating process of fMRI observations, and which we will show actually verifies our Assumptions \ref{ass:sol} and \ref{ass:ass} for recovery with IRKSN.

\paragraph{Example 2: Simplistic fMRI generating process} For each observation $i$, $\x_i$ represents the observed activated voxels from the fMRI, so we may consider that they consist in (i) the functional region related to the noiseless target $y$ (i.e. in this case for instance, say, the visual cortex functional region), and (ii) some unrelated regions that are activated for some other reasons (e.g. the functional region responsible for movement if the subject is moving).

Therefore, we model each observation $\x_i$ (i.e. row of $\X$, seen as a column vector) as follows: 

$$\x_i = y_i \w^* + \bm{\gamma}_i$$

Where $\w^*$ is the true model, which support $S=\text{supp}(\w^*)$ is the true functional region we wish to reconstruct, $y_i$ is considered to be both the noiseless target variable, but also the variable modulating the functional region: for instance, if $y_i$ denotes the presence or absence of an image in front of the subject, the functional region for visual stimuli will be more or less active depending on $y_i$, and where $\bm{\gamma}_i$ is a variable which we consider to have a support disjoint from $supp(\w^*)$, which denotes all the other unrelated functional region that are active at observation $i$ (e.g. as discussed above, which can be nonzero if the functional  region responsible for, say, movement, or some other regions, are active at the time of measurement $i$). Let us also assume that the random variable associated with samples $\bm{\gamma}_i$ are independent of the random variable associated with samples $y_i$ (this corresponds to saying that, say, the event of moving (or any brain activation event unrelated to the activation coming from the presentation of the image), is independent of the event of being presented a certain image).
Additionally, let us assume that $\| \w^* \| = 1$. 

Since $\bm{\gamma}_i$ and $\w^*$ are assumed to have disjoint support, we can therefore verify that, for all samples $i$:

\begin{equation}\label{eq:solverified}
\langle \x_i , \w^* \rangle =  y_i \|\w^*\|^2 + 0 = y_i    
\end{equation}
Therefore, $\w^*$ is indeed a solution of the system $\X \w^* = \y$

Also, we can write $\X$ as: 
$\X = \y  {\w^*}^{\top} + \bm{\Gamma}$, where each row $i$ of $\bm{\Gamma}$, seen as a column vector, is $\bm{\gamma}_i$, and based on the assumption above that every $\bm{\gamma}_i$ has a support disjoint from $\text{supp}(\w^*)$, we have: 

$$\X_S = \y {\w^*_S}^{\top} = \frac{\y}{\|\y \|}  \| \y\| {\w^*_S}^{\top}$$

Where we can recognize on the right hand side above the SVD of $\X_S$, from which we can deduce that:
 $$\X_S^{\dagger} = \w^*_S  \frac{1}{\| \y\|} \frac{\y^{\top}}{\|\y\|}$$
Which implies $$\X_S^{\dagger} \y = \frac{1}{\| \y\|^2} \w^*_S \y^{\top} \y = \w^*_{S}$$

And therefore, $\w^*_S$ is indeed here the minimum $\ell_2$ norm solution of the linear system $\X_S \w_S^* = \y$, since by property of the pseudo-inverse, such minimal $\ell_2$ norm solution is $\X_S^{\dagger} \y$.
Combined with equation \ref{eq:solverified}, we obtain that $\X$, $\y$ and $\w$ verify Assumption~\ref{ass:sol}. Now let us consider some $\ell \in S$: \\
$$\X_S^{\dagger} \x_{\ell} =  \frac{1}{\| \y\|^2} \w^*_S \y^{\top} (\y w_{\ell}) = w_{\ell} \w^*_S $$
And therefore 
\begin{equation}\label{eq:s}
    \min_{\ell \in S} | \langle \X_S^{\dagger} \x_{\ell} , \w^*_S\rangle |  =  \| \w_S^* \|^2 \min_{\ell \in S}  | w_{\ell} | > 0
\end{equation}
Where the last inequality is strictly positive, and much greater than $0$ if the smallest nonzero value of $\w^*$ is big enough (in absolute value). Additionally, on the other hand, if $\ell \in \bar{S}$, since we assumed that $\bm{\Gamma}$ is composed of variables independent of $\y$, and assuming that $\y$ and $\bm{\Gamma_{\ell}}$ have zero mean for every $\ell$, we obtain that for large enough sample size, $\y^{\top} \bm{\Gamma}_{\ell} \approx 0$, and therefore we have: 
$\X_S^{\dagger} \x_{\ell} =\frac{1}{\| \y\|^2} \w^*_S \y^{\top} \bm{\Gamma}_{\ell} \approx 0$, and therefore, 

\begin{equation}\label{eq:notS}
    \langle \X_S^{\dagger} \x_{\ell} , \w^*_S\rangle \approx 0
\end{equation}
Which therefore implies that : 
\begin{equation}
    \max_{\ell \in \bar{S}} |\langle \X_S^{\dagger} \x_{\ell} , \w^*_S \rangle | \approx 0 < \min_{\ell \in S} | \langle \X_S^{\dagger} \x_{\ell} , \w^*_S \rangle |
\end{equation}

Which is our Assumption \ref{ass:ass}, and therefore, $\X$, $\w^*$ and $\y$ verify both Assumptions \ref{ass:sol} and \ref{ass:ass} which are sufficient conditions for recovery with IRKSN.  Also note that however the matrix $\X_S$ is not injective here therefore the sufficient Assumption \ref{ass:assl1} for recovery with $\ell_1$ norm is not verified. Therefore, this might potentially explain the success of the $k$-support norm as a regularizer in fMRI tasks, contrary to $\ell_1$ norm based recovery methods which experimentally appear to produce worse results.

Finally, we emphasize that this is only a naive modeling of the true fMRI data, but we believe that it may be useful to understand the success of $k$-support norm on such particular tasks. It also gives more intuition on our conditions for recovery, and on which kind of tasks $k$-support norm may be a useful regularizer to consider.

\section{Extra experiments: prediction task on real-life datasets}\label{sec:prediction}

In this section, we run some experiments on real-life datasets to illustrate the applicability of IRKSN on prediction problems, for various datasets. Although sparse recovery is the primary goal of our paper, and is a goal distinct from prediction, we still find interesting to analyze the performance of IRKSN on predictions tasks, since those also often arise in practice.
\subsection{Setting}
As before, we consider the problem of sparse linear regression, where our goal is to minimize the expected mean squared error (MSE) loss of prediction $\mathbb{E}_{X, Y} (Y - \hat{Y})^2$, where $Y$ is the true regressed target, and $\hat{Y}$ is the predicted target, predicted linearly from the regressors $X$: 
$$\hat{Y} = \langle \hat{\bm{w}}, X\rangle + b = \sum_{i=1}^d  \hat{w}_i X_i +b  $$
(where $b$ is the intercept, fitted separately (see Appendix \ref{app:details} for more details)), and where $\hat{\bm{w}}$ is a sparse model that we seek to estimate from a training set of $n$ observations of $X$ and $Y$.
For each run, we first randomly split the data into a training set, and a test set which contains $25 \%$ of the data. Then, we split the training set into an actual training set and a validation set, with the same proportion ($75\% / 25\%$). Hyperparameters, including learning rate parameters and early stopping time are fitted to minimize the MSE on the validation set. Then the empirical MSE on the test set is reported. This procedure is repeated 10 times, and we report in Tables \ref{tab:pen1} and \ref{tab:pen2} the mean and standard deviation of that test set MSE. 

Additional details including details on the intercept and a preprocessing step, as well as the values for the grid-search of each algorithm are described in Appendix \ref{app:details}. Our results are produced on a  server of CPUs with 32 cores and 126G RAM, and take 5 hours to run.
\subsection{Datasets}
We evaluate the algorithms on the following open source datasets (obtained from the sources LibSVM \cite{libsvm} and OpenML \cite{vanschoren2014openml}), of which a brief summary is presented in Table \ref{tab:ds}.
\begin{table}[!bht]
\begin{center}
\begin{small}
\begin{sc}
  \begin{tabular}{lccc}
  \toprule
 Dataset & $d$ & $n$\\
  \midrule
     \textbf{leukemia}$^{(1)}$ & 7129 &  38   \\
  \textbf{housing} $^{(2)}$ & 8 & 20640 \\
  \textbf{scheetz2006}$^{(3)}$ & 18975 & 120 \\
    \textbf{rhee2006}$^{(4)}$ & 361 & 842 & \\
\bottomrule
\end{tabular}
\end{sc}
\end{small}
\end{center}
\caption{Datasets used in the  comparison. \textit{References:}  $^{(1)}$:  \cite{golub1999molecular}, $^{(2)}$  \cite{pace1997sparse}, $^{(3)}$:  \cite{scheetz2006regulation}, $^{(4)}$: \cite{rhee2006genotypic}. \textit{Sources:} $^{(1)}$:  \cite{libsvm}, $^{(2)}$ \cite{kooperberg1997statlib} downloaded with \texttt{scikit-learn} \cite{pedregosa2011scikit}, $^{(3, 4)}$:  \cite{breheny2022} .}
\label{tab:ds}
\end{table}
\subsection{Results}
We present our results in Tables \ref{tab:pen1} and \ref{tab:pen2}. Generally, we observe that for datasets with a large $d$ (such as \texttt{leukemia} and \texttt{scheetz2006}), $\ell_1$ based methods such as Lasso, IRCR, or SRDI achieve poorer performance: indeed, the Lasso is known to saturate when $d > n$ \cite{zou2005regularization}, i.e. its predicted $\w^*$ cannot contain more than $n$ nonzero variables. This is not the case for the ElasticNet and $k$-support norm based algorithms like IRKSN, which is why those latter algorithms achieve a good score in this $d > n$ setting. 

Perhaps surprisingly, IROSR also achieves a good score on \texttt{scheetz2006} ($d>>n$), even if its reparameterization is supposed to enforce some $\ell_1$ regularization \cite{vaskevicius2019}. However, the theory in \cite{vaskevicius2019} holds for small initializations and specific stepsizes, so we hypothesize that due to our grid search on the stepsize, our version of IRCR might be able to explore regimes beyond the $\ell_1$ norm, beyond the scope of the theory in the IRCR paper.
Our results also confirm the findings from \cite{argyriou2012sparse}, namely that the $k$-support norm regularization often outperforms the ElasticNet: this is also true for iterative regularizations using the $k$-support norm (namely, IRKSN). 

\begin{table}[!bht]
\begin{center}
\begin{small}
\begin{sc}
\begin{tabular}{llll}
  \toprule
  Method & leukemia  & housing  \\
  \midrule
 IHT & $\bm{0.322 \pm 0.137}$ & $\bm{ 0.535 \pm 0.011}$  \\
  Lasso & $0.450 \pm 0.204$& $\bm{0.535 \pm 0.016} $  \\ 
  ElasticNet & $\bm{0.307 \pm 0.154}$  &$\bm{ 0.540 \pm 0.031}$  \\ 
  KSN pen. & $\bm{{0.251 \pm 0.090}} $ & $\bm{0.533 \pm 0.009}$ \\
  OMP & $0.730 \pm 0.376$  & $ \bm{0.533 \pm 0.009}$  \\
  SRDI & $0.396 \pm 0.220$ & $\bm{0.533 \pm 0.009}$  \\
  IROSR & $0.352 \pm 0.121$  & $0.655 \pm 0.013 $  \\
  IRCR & $\bm{0.326 \pm 0.102}$  & $ \bm{0.534 \pm 0.010}$ \\
  \textbf{IRKSN (ours)} & $\bm{ 0.264 \pm 0.091}$ & $ \bm{0.538 \pm 0.012}$ \\
\bottomrule
\end{tabular}
\end{sc}
\end{small}
\end{center}
\caption{Test MSE of the methods of Table \ref{tab:pen} on the \texttt{leukemia} and \texttt{housing} datasets (bold font: mean within the standard deviation of the best score from each column).}
\label{tab:pen1}
\end{table}
\begin{table}[!bht]
\begin{center}
\begin{small}
\begin{sc}
\begin{tabular}{llll}
  \toprule
  Method &  scheetz2006 &  rhee2006 \\
  \midrule
 IHT  & $\bm{0.008 \pm 0.003} $ & $ \bm{0.576 \pm 0.053}$ \\
  Lasso & $0.012 \pm 0.008 $ & $ \bm{0.557 \pm 0.049}$ \\ 
  ElasticNet  & $\bm{0.009 \pm 0.004} $ & $\bm{0.541 \pm 0.042}$ \\ 
  KSN pen. & $\bm{0.008 \pm 0.003}$ & $\bm{0.556 \pm 0.035}$ \\
  OMP & $ 0.016 \pm 0.06 $ & $0.684 \pm 0.057$  \\
  SRDI & $0.018 \pm 0.013 $ & $\bm{0.567 \pm 0.043}$ \\
  IROSR & $\bm{0.007 \pm 0.003}$ & $\bm{0.583 \pm 0.044}$ \\
  IRCR & $ 0.018 \pm 0.013$ & $1.389 \pm 0.105$ \\
  \textbf{IRKSN (ours)} & $ \bm{0.008 \pm 0.003}$ & $ \bm{0.578 \pm 0.038} $\\
\bottomrule
\end{tabular}
\end{sc}
\end{small}
\end{center}
\caption{Test MSE of the methods of Table \ref{tab:pen} on gene array datasets (\texttt{scheetz2006} and \texttt{rhee2006}).}
\label{tab:pen2}
\end{table}

\subsection{Details on the implementation of algorithms}\label{app:details}
In this section, we present additional details on the experiments from Section \ref{sec:prediction}. First, for all the algorithms, we added a preprocessing step that centers and standardizes each column on the trainset (i.e. substract its mean and divides it by its standard deviation), and that removes columns that have 0 variance (i.e. column containing the same, replicated value). We later use this learned transformation on the validation set and the test set. In addition, we fit the intercept $b$ of the linear regression separately, as is common in sparse linear regression, by centering the target $\bm{y}$ before training, and then using the below formula for the intercept:
$$ b = \bar{y} - \langle \bar{\bm{X}}, \bm{\hat{w}}\rangle$$
Where $\bar{y}$ is the average of the target vector $\bm{y}$, $\bm{\hat{w}}$ is the final estimated model on the train set (fitted with a centered target $\bm{y} - \bar{y}$), and $\bar{\bm{X}}$ is the column-wise average of the (preprocessed) training data matrix $\bm{X}$. The prediction of a new preprocessed data sample $\bm{x'}_i$ is then $\hat{y}_i := \langle \bm{\hat{w}} , \bm{x'}_i \rangle + b$.

We recode most algorithms from scratch in \texttt{numpy} \cite{harris2020array}, except for the Lasso, ElasticNet, and OMP, for which we use the \texttt{scikit-learn} \cite{pedregosa2011scikit} implementation.  For the implementation of the proximal operator of the (half-squared) $k$-support norm (used in IRKSN and KSN penalized), we use the existing implementation from the \texttt{modopt} package \cite{farrens2020pysap}, that is based on the efficient algorithm described in \cite{mcdonald2016new}.
Below we present the grid-search parameters for each algorithms, that allowed them to achieve a good performance consistently on all datasets from Table \ref{tab:ds}.
For all iterative regularization algorithms (i.e. SRDI, IROSR, IRCR, and IRKSN), we monitor the validation MSE every 5 iterations, and choose the stopping time as the iteration number with the best MSE. We also proceed as such for IHT, since because we grid-search the learning rate, if that latter is too high, decrease of the function at each step may not be guaranteed. We run each iterative algorithm that we reimplemented (IHT, KSN penalty, SRDI, IROSR, IRCR, IRKSN) with a maximum number of iterations of 500.
Finally, we release our code at \url{https://github.com/wdevazelhes/IRKSN_AAAI2024}.

\paragraph{IHT} \cite{blumensath2009}
We search $k$ (the number of components kept at each iterations) in an evenly spaced interval from $1$ to $d$ containing 5 values, and search the learning rate $\eta$ in $\{0.0001, 0.001, 0.01, 0.1, 1.\}$.

\paragraph{Lasso} \cite{tibshirani1996} We use the implementation \texttt{lasso\_path} from \texttt{scikit-learn} \cite{pedregosa2011scikit}, with its default parameters, which automatically choses the path of  $\lambda$ based on a data criterion.

\paragraph{ElasticNet} \cite{zou2005regularization} We use the implementation \texttt{enet\_path} from \texttt{scikit-learn} \cite{pedregosa2011scikit}, which similarly as above, automatically chooses the path of  $\lambda$ based on a data criterion. In addition, we choose the recommended values $\{.1, .5, .7, .9, .95, .99, 1\}$ of \texttt{ElasticNetCV} for the relative weight of the $\ell_1$ penalty.

\paragraph{KSN penalty} \cite{argyriou2012sparse}
We choose the strenght of the $k$-support norm penalty $\lambda$ in $\{0.1, 1.\}$, the $k$ (from the $k$-support norm) in an evenly spaced interval from $1$ to $d$ containing 5 values, and we found that simply setting the constant $L$ from \cite{argyriou2012sparse} (which is the inverse of the learning rate) to $1e6$ achieves consistently good results across all datasets.

\paragraph{OMP} \cite{tropp2007}
We use the implementation from \texttt{scikit-learn} \cite{pedregosa2011scikit}, and we search $k$ in an evenly spaced interval from $1$ to $\min(n, d)$ (indeed, OMP needs $k$ not to be bigger than $\min(n, d)$) containing 5 values.
\paragraph{SRDI} \cite{osher2016}
We search for the parameters $\kappa$ and $\alpha$ from \cite{osher2016}, respectively in the intervals $\{0.0001, 0.001, 0.01, 0.1, 1.\}$ and $\{0.0001, 0.001, 0.01, 0.1, 1.\}$.
\paragraph{IROSR} \cite{vaskevicius2019}
We search for the parameters $\eta$ and $\alpha$ respectively in $\{0.0001, 0.001, 0.01, 0.1, 1.\}$ and $\{0.0001, 0.001, 0.01, 0.1, 1.\}$.

\paragraph{IRCR} \cite{molinari2021iterative} For IRSR, we found that setting $\tau$ and $\sigma$ to $\frac{0.9}{\sqrt{2 \|\bm{X}\|^2}}$ (in order to verify the condition of equation (6) in \cite{molinari2021iterative}) consistently performs well on all datasets.

\paragraph{IRKSN (ours)} For IRKSN, we search $\alpha$ (from Algorithm \ref{alg:irksn}) in $\{0.0001, 0.001, 0.01, 0.1, 1, 10\}$, and $k$ (from the $k$-support norm), in an evenly spaced interval from $1$ to $d$ containing 5 values. For the \texttt{RHEE2006} dataset, we found that the hyperparameters need to be tuned slighty more to attain comparable performance with other algorithms: the reported performance is for $\alpha=0.6$, $k=33$, ran for 1,000 iterations.

\end{document}